\newcommand{\kgnote}[1]{{\color{orange}{#1}}}
\newcommand{\nsi}[1]{{\color{red}{#1}}}
\newtheorem{lemma}{Lemma}
\newtheorem{theorem}{Theorem}
\newtheorem{claimnum}{Claim}
\theoremstyle{definition}
\newtheorem{definition}{Definition}
\newtheorem{example}{Example}
\newtheorem{corollary}[lemma]{Corollary}
\theoremstyle{definition}
\newcommand{\R}{\mathbb{R}}
\newcommand{\Z}{\mathbb{Z}}
\newcommand{\argmax}{\operatorname{arg\,max}}
\newcommand{\opt}{{\normalfont\textsc{OPT}}}
\newcommand{\E}{\mathbb{E}}
\def\R{\ensuremath{\mathbb{R}}}
\def\opt{\ensuremath{\textsc{opt}}}
\def\Pr{\ensuremath{\mathrm{Pr}}}
\def\pf{\ensuremath{\underline{p}}}
\def\pc{\ensuremath{\overline{p}}}
\def\med{\ensuremath{\mathrm{med}}}
\def\Cmed{\ensuremath{C_{\med}}}
\begin{document}
\title{Reducing Inefficiency in Carbon Auctions \\with Imperfect Competition}

\author{Kira Goldner%
\thanks{%
    {Columbia University (\url{kgoldner@cs.columbia.edu}).  Supported in part by NSF CCF-1420381 and by a Microsoft Research PhD Fellowship.  Supported in part by NSF award DMS-1903037 and a Columbia Data Science Institute postdoctoral fellowship.}}
\and Nicole Immorlica%
\thanks{%
    {Microsoft Research (\url{nicimm@microsoft.com}).  Supported in part by NSF Award 1841550, ``EAGER: Algorithmic Approaches for Developing Markets.''}}
\and Brendan Lucier%
\thanks{%
        {Microsoft Research (\url{brlucier@microsoft.com})}} }

\date{\today}
\maketitle

\begin{abstract}
We study auctions for carbon licenses, a policy tool used to control the social cost of pollution.  Each identical license grants the right to produce a unit of pollution.  Each buyer (i.e., firm that pollutes during the manufacturing process) enjoys a decreasing marginal value for licenses, but society suffers an increasing marginal cost for each license distributed.  The seller (i.e., the government) can choose a number of licenses to put up for auction, and wishes to maximize the societal welfare: the total economic value of the buyers minus the social cost.
%
Motivated by emission license markets deployed in practice, we focus on uniform price auctions with a price floor and/or price ceiling.  The seller has distributional information about the market, and their goal is to tune the auction parameters to maximize expected welfare.  The target benchmark is the maximum expected welfare achievable by any such auction under truth-telling behavior.  Unfortunately, the uniform price auction is not truthful, and strategic behavior can significantly reduce (even below zero) the welfare of a given auction configuration.

We describe a subclass of ``safe-price'' auctions for which the welfare at any Bayes-Nash equilibrium will approximate the welfare under truth-telling behavior.  We then show that the better of a safe-price auction, or a truthful auction that allocates licenses to only a single buyer, will approximate the target benchmark.  In particular, we show how to choose a number of licenses and a price floor so that the worst-case welfare, at any equilibrium, is a constant approximation to the best achievable welfare under truth-telling after excluding the welfare contribution of a single buyer.

\end{abstract}
\newpage


\section{Introduction}


%
%


Licenses for carbon and other emissions are a market-based policy tool for reducing pollution and mitigating the effects of climate change. Roughly speaking, a government agency distributes pollution licenses to firms according to some mechanism.  At the end of a period of time (e.g., a year), firms must submit licenses to cover their pollution or else face severe penalties.  Different ways of distributing licenses are possible.  For example, if licenses are simply sold at a fixed price to anyone who wishes to pay, then this is equivalent to a \emph{carbon tax} where polluters must pay a linear fee to offset their emissions.  Alternatively, in a \emph{cap-and-trade} mechanism, the agency releases a fixed number $C$ of pollution licenses, either via a pre-determined allocation at no price (e.g., based on prior pollution or industry averages) or via auction, and then polluters can trade these licenses on the open market.  Many emission trading systems take features of both schemes, combining a cap-and-trade market with a \emph{price floor} $\pf$, where each license must be sold above the reserve price of $\pf$, and/or a \emph{price ceiling} $\pc$, where an extra license beyond the cap can always be purchased at a price of $\pc$.  In general, such a pollution license market is referred to as an Emission Trading System (ETS).  

There are many important ETS in effect today.  The EU ETS has thousands of participating firms and has raised billions of dollars in auction revenue over the past 10 years~\cite{EUETS}.  The Western Climate Initiative runs a license auction that is linked between California and Quebec~\cite{WCI,CARB}, and the Regional Greenhouse Gas Initiative (RGGI, pronounced `Reggie') serves New England and the New York region~\cite{RGGI}.  These markets differ in the details of their implementations, but in each case,
licenses are distributed on a regular schedule, with a significant quantity of those licenses sold at auction.  

One can model such a license auction as a multi-unit auction; that is, an auction for multiple identical goods.  There is no bound on the supply of goods, but there is a cost of production corresponding to the social cost of more pollution.  The social cost is typically assumed to be increasing and convex, and the value for licenses for each buyer is commonly assumed to be increasing and concave~\cite{weitzman74}.  Each of the ETS described above uses a uniform-price auction rule to resolve this multi-unit auction.  Such auctions proceed roughly as follows.  Participating firms declare bids, which take the form of a non-decreasing concave function that describes their willingness-to-pay for varying quantities of licenses.  This can alternatively be viewed as a list of non-increasing marginal bids for each successive license.  Any marginal bids below the reserve price $\pf$ are removed, and licenses are then distributed to the highest remaining marginal bids while supplies last.  Each firm then pays a fixed price $p$ per license, where $p$ is set to some value between the lowest marginal winning bid and the highest marginal losing bid.  Such auctions are not truthful, but are common in practice due to their many advantageous properties; see, e.g., Chapter 7 of~\cite{milgrom_2004}.

Implementing a uniform-price auction for carbon licenses presents an optimization challenge: good outcomes require that the system designer correctly sets the quantity of licenses to distribute and/or the price (either direct price or auction reserve) at which they will be sold.  The goal is to maximize \emph{social welfare}: the aggregate value that firms receive for their licenses (i.e., by producing goods) minus the externality on society caused by polluting the corresponding amount during production.  This optimization problem is complicated by uncertainty.  Even if the social cost of pollution is fully known, the designer may not know what the demand for licenses will be, which makes it hard to predict the optimal level of pollution to allow.  The goal of the designer, then, is to set the parameters of the uniform price auction to maximize the expected outcome over uncertainty in the market.  Notably, the presence of social costs significantly increases the complexity of multi-unit auctions, since efficiency becomes a mixed-sign objective.  Even a slight misallocation of the licenses, resulting in a small reduction of received value, might have a disproportionately large effect on net welfare.  In practice, inefficient allocations may be partially resolved in the trade phase of cap-and-trade systems.  However, there are significant trading frictions, so it is imperative to choose an initial allocation, via the auction, that is as efficient as possible. 

Since uniform price auctions are not truthful, one must account for strategic behavior of participating firms.  A notable feature of ETS markets in practice is that, despite their size, they commonly have a small number of participants with significant market power.\footnote{For example, in the EU ETS, the top 10 firms together control approximately 30 percent of all licenses allocated and traded; see~\cite{Cotton15}, page 127.}
The presence of dominant players suggests imperfect competition and raises the issue of strategic manipulation, in which large individual firms try to influence the market in their favor.  One particular concern is demand reduction, where firms reduce their bids to suppress the price determined by the auction.  We ask: how should one set auction parameters to (approximately) maximize welfare, in the face of strategic bidding?

\subsection{Our Results}



In this paper we apply ideas from theoretical computer science to approach the design of carbon license auctions with strategic firms.  In this new domain, we use facts regarding uniform price auctions to understand what is happening in these markets and probabilistic analysis to handle the uncertainty of valuation realizations.  We then apply a Price of Anarchy analysis to give strategic guarantees.  As a result, we provide a concrete recommendation for how to set the parameters of the mechanism used in practice---a recommendation with provable approximation guarantees.

%
%
A market instance is described by a convex social cost curve and, for each participating firm, an independent distribution over concave valuations.  To capture the relevant space of auctions, we formalize a class of allocation rules that we call \emph{cap-and-price auctions}.\footnote{Note, we are focused on the auction phase in this paper and do not model the trade phase of these systems.}  These auctions are parameterized by a maximum number of licenses to allocate, a price floor, and a price ceiling.  They proceed by running a uniform-price auction for the given number of licenses, subject to the price floor and ceiling.  Given a market instance, there is some choice of cap-and-price auction that maximizes the expected welfare generated under non-strategic (truth-telling) behavior.  This optimal non-strategic solution will serve as a benchmark, which we denote $\opt$.  Our main result is an auction that approximates $\opt$ at any equilibrium of strategic behavior.

We show that, in general, the welfare obtained at an equilibrium of a cap-and-price auction can be negative, even for the auction that optimizes welfare under non-strategic bidding.  This motivates our main question: given a cap-and-price mechanism that achieves a certain expected welfare $W$ under truth-telling behavior, can one modify the auction parameters so that the expected welfare is an $O(1)$ approximation to $W$ at any Bayes-Nash equilibrium?   To answer this question, we describe a subclass of cap-and-price auctions that use \emph{safe prices}.  A \emph{safe-price auction} is a cap-and-price auction in which the price floor is at least the average social cost, per license, of selling all of that auction's licenses.  We show that for any such auction the worst-case welfare at any Bayes-Nash equilibrium is within a constant factor of the welfare under truth-telling.  This result proceeds by transforming the market instance into a corresponding instance without social costs, and makes use of bounds on the Bayes-Nash price of anarchy for uniform-price auctions~\cite{MarkakisT15,deKeijzerMST13}.

To construct an auction that approximates $\opt$ at equilibrium, it therefore suffices to find a safe-price auction whose non-strategic welfare approximates that of the best unconstrained cap-and-price auction.  However, as it turns out, there are market instances for which no constant approximation by a safe-price auction is possible.
Such a situation can occur if there is a single firm that drives most of the demand for licenses, and this demand has very high variance.  We show that this is essentially the only barrier to our desired price of anarchy result: one can always construct an auction that achieves a constant approximation to $\opt$ minus the welfare obtainable by allocating licenses to any one single auction participant.  In other words, as long as no single firm accounts for most of the expected welfare of the carbon license market, one can find a cap and price floor so that the welfare at any equilibrium is within a constant of the optimal outcome achievable without strategic behavior.  We argue that this assumption is reasonable, since presumably the motivation for running a carbon license market in the first place is that the demand for emission licenses is distributed across multiple firms.  An alternative way to view this result is that one can always achieve a constant approximation to our benchmark by either running a safe-price auction, or by selling to only a single firm.  

Along the way, we also show that for any cap-and-price auction that sells all of its licenses with probability $q > 0$, there is a safe-price auction whose welfare (under truth-telling) is within a factor $O(1/q)$ of the original auction's welfare.  In particular, if our benchmark is implemented by a cap-and-price auction that distributes all of its licenses with constant probability, then we can construct a safe-price auction whose welfare at any Bayes-Nash equilibrium is a constant approximation to the benchmark.

Note that our results are applicable for any setting with imperfect competition where a uniform price auction is used to allocate goods with a production or social cost.  For instance, they may apply to allocating queries to a database, where the social cost is privacy.  Another example might be allocating medallions to taxi and rideshare drivers in New York City, where the social cost is congestion.

\subsection{Related Work}

Our work is related to rich line of literature in economics comparing emission licence auction formats and flat ``carbon tax'' pricing methods to control the externalities of pollution.  Weitzman~\cite{weitzman74} proposed a model of demand uncertainty and initiated a study comparing price-based vs quantity-based screening in the context of pollution externalities.  Kerr and Cramton~\cite{KerrCramton98} noted that auctions tend to generate more efficient outcomes than grandfathered contracts (i.e., pre-determined allocations based on prior usage), which distort incentives to reduce pollution and efficiently redistribute licenses.  Cramton, McKay, Ockenfels and Stoft~\cite{CramtonBook17} subsequently lay out arguments in favor of tax-based approaches.  Murray, Newell, and Pizer~\cite{Murray09} analyze the use of price ceilings in emission license auctions, and argue that they provide benefits of both auction-based and tax-based systems, improving efficiency in dynamic markets with intertemporal arbitrage.   For a recent overview of auction-based systems used in practice, from both the economic and legal perspectives, we recommend~\cite{Cotton15}.

A similar line of literature studied alternative approaches to the related electricity markets, where individual providers sell electricity into a central grid.  Whereas the main issue in emission license sales is the social externality of production, the main focus in electricity markets was incentivizing participation of small firms.  The primary discussion focused on using uniform pricing versus discriminatory pricing in the resulting procurement auction~\cite{CramtonS07,RassentiSW03}.

Within the theoretical computer science community, there have been numerous studies of the efficiency of auction formats at equilibrium.  For multi-unit auctions without production costs (or, equivalently, no social cost of pollution), the price of anarchy of the uniform-price auction was shown to be constant for full-information settings, and this was subsequently extended to Bayes-Nash equilibria~\cite{MarkakisT15,deKeijzerMST13}.  Our work can be seen as an extension of that work to the setting with a convex cost of allocation.
Auctions with production costs have also been studied~\cite{BlumGMS11,HuangK15}, but primarily from the perspective of mechanism design, where the goal is to develop allocation and payment rules to achieve efficient outcomes.  In the present work we do not take a mechanism design approach; we instead restrict our attention to (non-truthful) uniform price auctions, as these are used in practice, and study bounds on the worst-case welfare at equilibrium under different choices of the auction parameters.

Uniform-price auctions with costs can be viewed as games for which the designer has a mixed-sign objective (i.e., total value generated minus the social cost of production).  Prior work on the price of anarchy under mixed-sign objectives has focused primarily on routing games~\cite{Vetta02,ChauS03,ColeDR12}.  Results in this space tend to be negative, motivating alternative measures of performance (such as minimizing a transformed measure of total cost) that avoid the pitfalls of mixed-sign optimization.  Our work shows that in an auction setting, use of an appropriately-chosen thresholding rule (in the form of a reserve price) can enable a constant approximation to our mixed-sign objective of total value minus a convex social cost.

There is a vast amount of work on auctions with externalities, such as the seminal work of \cite{jehiel1996not}, and other work including externalities in advertising auctions~\cite{Ghosh:2008:EOA:1367497.1367520}, characterizations of equilibria in auctions with externalities~\cite{leme2012sequential}, and more.  However, these externalities are private and held by the buyers, as opposed to public and suffered by the seller as in this paper.  In this case, the externality functions more like a production cost, as described above.

Kesselheim, Kleinberg and Tardos~\cite{KesselheimKT15} study the price of anarchy of an energy market auction, which is a similar application to the carbon license auction that we study.  Their focus is on the uncertainty of the supply and temporal nature of the auction, and not on externalities of production, and hence their technical model is quite different.

The theory of bidding in uniform-price auctions with imperfect competition is well-developed in the economic theory literature.  Much of this work focuses specifically on the efficiency and revenue impact of demand reduction, and how modifications to the auction format or context might impact it.  Demand reduction occurs at equilibrium even in very simple settings of full information, can dramatically reduce welfare, and is a concern in practice~\cite{Wilson79,Weber97,AusubelCPRW14}.  One way to reduce the inefficiency of demand reduction is to perturb the supply, either by allowing the seller to adjust the supply after bids are received~\cite{McAdams07}, or by randomizing the total quantity of goods for sale (or otherwise smoothing out the allocation function)~\cite{KremerN04}.  Such results are typically restricted to full-information settings.  Moreover, these approaches are not necessarily appropriate in the sale of government-issued licenses, where one typically expects commitment and certainty about the quantity being sold.  In contrast, we forego a precise equilibrium analysis and instead argue that setting a sufficiently high price floor can likewise mitigate the impact of demand reduction.

\subsection{Roadmap}

Our main result is a cap-and-price auction whose welfare in equilibrium approximates the maximum welfare of a cap-and-price auction with non-strategic reporting in markets without a single dominant bidder.  In Section~\ref{sec:prelims}, we introduce our model and formally define cap-and-price auctions.  In Section~\ref{sec:uniformprice}, we first show that we can restrict attention to cap-and-price auctions with an infinite price ceiling (i.e., ones who never sell more licenses than the cap).  We then derive a class of such cap-and-price auctions, which we call safe-price auctions, whose performance in equilibrium approximates the performance in a non-strategic setting.  Thus it suffices to show that the best safe-price auction approximates the best cap-and-price auction in a non-strategic setting.  However, this is not true: we give an example where it can be unboundedly worse.  In Section~\ref{sec:main}, we demonstrate that the only barrier to this is the existence of a dominant bidder, yielding our main result.

\section{Preliminaries} \label{sec:prelims}

There are $n$ firms seeking to purchase carbon licenses.  Licenses are identical, and each permits one pollution unit.  Each firm $i$ has a monotone non-decreasing concave \emph{valuation curve} $V_i(\cdot)$ that maps a number of pollution units $x \in \Z_{\geq 0}$ to a value $\R_{\geq 0}$; this is the value they enjoy for polluting this amount.
The profile of valuation curves $\mathbf{V} = (V_1, \dotsc, V_n)$ is drawn from a publicly-known distribution $F$ over profiles, where the draw of $V_i(\cdot)$ is the private information of firm $i$.  Valuations are drawn independently across firms, so $F = F_1 \times \dotsc \times F_n$ is a product distribution and 
$V_i \sim F_i$.

There is a publicly known monotone non-decreasing convex \emph{cost function} $Q(\cdot)$ that maps a number of pollution units $x \in \Z_{\geq 0}$ to the value of externality that society faces for having this much pollution.
Given a valuation profile $(V_1, \dotsc, V_n)$, the \emph{welfare} of a given allocation rule $\textbf{x} = (x_1, \dotsc, x_n)$ is $\sum_i V_i(x_i) - Q(\sum_{i=1}^n x_i)$.  Our objective is to maximize expected welfare.

Given some integer $x \geq 0$, we write $V(x)$ for the maximum aggregate value that could be obtained by optimally dividing $x$ licenses among the firms.  That is, 
$$V(x) = \max_{\vec{y} \in \Z_+^n : ||y||_1 = x} \sum_i V_i(y_i).$$
We refer to $V$ as the \emph{combined valuation curve}.
As each $V_i(\cdot)$ is weakly concave, so is $V(\cdot)$.  
We will sometimes abuse notation and write $V \sim F$ to mean that $V$ is distributed as the aggregate value when $(V_1, \dotsc, V_n) \sim F$.  We'll write $W(V,x) = V(x) - Q(x)$ for the welfare generated by allocating $x$ licenses optimally among the firms.

We'll write $v_{i}(j) = V_i(j) - V_i(j-1)$ for firm $i$'s marginal value for aquiring license $j$, for each $j \geq 1$.  By the monotonicity and concavity of $V_i(\cdot)$, $v_{i}(\cdot)$ is non-increasing.  We'll also write $V_i(j|k) = V_i(j+k) - V_i(k)$ for the marginal value of $j$ additional items given $k$ already allocated.  
We'll write $d_i(p) = \max\{ j \colon v_i(j) \geq p \}$ for the number of units demanded by bidder $i$ at price $p$.  



We will study equilibria and outcomes of license allocation auctions.  An auction takes as input a reported valuation $V_i$ from each firm $i$.  The auction then determines an allocation $\textbf{x} = (x_1, \dotsc, x_n)$ and a price $p_i \geq 0$ that each firm must pay.  The auctions we consider will be uniform price auctions, where the auction determines a per-license price $p$ and each firm $i$ pays $p_i = p \cdot x_i$.  Given an implicit uniform-price auction, we will tend to write $x_i(\mathbf{\tilde{V}})$ (resp., $p(\mathbf{\tilde{V}})$) for the allocation to agent $i$ (resp., per-unit price) when agents report according to $\mathbf{\tilde{V}}$.  For a given valuation profile $\mathbf{V}$, we'll also write $U_i( \mathbf{\tilde{V}} )$ for the utility enjoyed by firm $i$ when agents bid according to $\mathbf{\tilde{V}}$:
\[ U_i(\mathbf{\tilde{V}}) = V_i(x_i(\mathbf{\tilde{V}})) - p(\mathbf{\tilde{V}}) \cdot x_i(\mathbf{\tilde{V}}). \]
Finally, given an auction $M$ and a distribution $F$ over input profiles, we will write $W(M, F)$ for the expected welfare of $M$ on input distribution $F$.  We will sometimes drop the dependence of $F$ and simply write $W(M)$ when $F$ is clear from context.  We emphasize that $W(M)$ is a non-strategic notion of welfare; it is the expected welfare of $M$ with respect to \emph{inputs} drawn from $F$, or equivalently the welfare of $M$ under truthful reporting when agent valuations are distributed according to $F$.  Let $\textbf{x}^M(V) = (x_1^M(V), \ldots, x_n^M(V))$ be the allocation rule of mechanism $M$ for realization $V$.  Then
\[W(M) = \E_{V \sim F}\left[\sum_i V_i\left(x_i^M(V)\right) - Q\left(\sum_i x_i^M(V)\right)\right].\]

A \emph{Bayes-Nash equilibrium} of a uniform-price auction is a choice of bidding strategies $\sigma = (\sigma_1, \dotsc, \sigma_n)$ for each agent, mapping each realized valuation curve $V_i$ to a (possibly randomized) reported valuation $\sigma_i(V_i)$, so that each agent maximizes their expected utility by bidding according to $\sigma_i$ given that other agents are bidding according to $\sigma_{-i}$.  That is, for all $V_i$ and all $\tilde{V}_i$, we have
\[ \E_{V_{-i} \sim F_{-i}}[ U_i(\sigma_i(V_i), \sigma_{-i}(V_{-i})) ] \geq \E_{V_{-i} \sim F_{-i}}[ U_i(\tilde{V}_i, \sigma_{-i}(V_{-i})) ].  \]

We make a standard assumption of \emph{no overbidding} on behalf of the firms uniform-price auction, which is motivated by the fact that overbidding is a weakly-dominated strategy.




Motivated by license auctions used in practice, we study a particular form of uniform price auction that we call a \emph{cap-and-price auction} which forces prices to be within some fixed interval.\footnote{Note the cap is not a hard constraint, but rather governs which prices bind.} 

\begin{definition}\label{def:capandprice}
A {\em cap-and-price auction} $M(C,\pf,\pc)$ is parameterized by a quantity $C \geq 1$, a price floor $\pf$, and a price ceiling $\pc > \pf$.  The allocation and price are determined as follows:
\begin{enumerate}
	\item If $\sum_i d_i(\pc) \geq C$, then $x_i = d_i(\pc)$ for all $i$ and $p = \pc$. \label{cp:pc}
	\item If $\sum_i d_i(\pf) < C$, then $x_i = d_i(\pf)$ for all $i$ and $p = \pf$. \label{cp:pf}
	\item Otherwise, choose $p = V(C) - V(C-1)$, the $C^{\mathrm{th}}$ highest bid, and choose $\vec{x}$ to be any optimal allocation of $C$ licenses among the firms: $x \in \arg\max_{\vec{y} \in \Z_+^n : ||y||_1 = C} \sum_i V_i(y_i)$. \label{cp:cap}
\end{enumerate}	
\end{definition}

We can think of this as a uniform price auction of up to $C$ licenses (case \ref{cp:cap} above), where the price is set to the lowest winning marginal bid, with two modifications.  First, there is a reserve price $\pf$, so that possibly fewer than $C$ licenses are sold if there is not enough demand at price $\pf$; this is case \ref{cp:pf}.  Second, if the lowest winning marginal bid would be larger than $\pc$, then the price is lowered to $\pc$ and firms can purchase as many licenses as they like at this price; this is case \ref{cp:pc}.


Given social cost function $Q$ and valuation distribution $F$, we will write $\opt$ for the optimal expected welfare obtained by any cap-and-price auction under truthful reporting.  That is, $\opt = \max_{C,\pf,\pc}\{ W(M(C,\pf,\pc)) \}$.  We will also tend to write $C^*$, $\pf^*$, and $\pc^*$ for the parameters that achieve this maximum.  We emphasize that this is a non-strategic notion: $\opt$ is the maximum expected welfare attainable when bidders report truthfully.  We view $\opt$ as a benchmark against which we will compare performance at Bayes-Nash equilibrium.  Note also that, in general, $\opt$ may be strictly less than the expected welfare of the unconstrained welfare-optimal allocation; an example is provided in Appendix~\ref{app:cap.price.bad}.

\section{Safe-Price Auctions} 
\label{sec:uniformprice}

We will derive a class of cap-and-price auctions, which we will call safe-price auctions, whose performance in equilibrium approximates their non-strategic welfare.  The hope is that the non-strategic welfare of this class then approximates the non-strategic welfare of the larger class of cap-and-price auctions. \begin{definition}
	A {\em safe-price auction} is a cap-and-price auction $M(C,\pf,\pc)$ parameterized by a license quantity $C$, a price floor 
	$\pf = Q(C)/C$, and price ceiling $\pc = \infty$.  We will write $M(C)$ for the safe-price auction with license quantity $C$, and $P(C) = Q(C)/C$ for the associated safe price.
\end{definition}

This definition restricts cap-and-price auctions in two ways. The first is that the price ceiling is now infinite.  The second is that we have imposed a lower bound on the price floor.\footnote{In fact we set $\pf$ to be \emph{equal} to $P(C)$ , but our results still hold if this is relaxed to $\pf \geq P(C)$.}  The first restriction is for convenience; the following lemma shows that it does not cause much loss in welfare. Motivated by this lemma, we will restrict our attention from now on to auctions with no price ceiling, that is, $\pc = \infty$.  We notate such mechanisms $M = (C, \pf)$.  Note that, in the language of definition~\ref{def:capandprice}, case (\ref{cp:pc}) can no longer occur.  That is, when is there is no price ceiling, only exactly $C$ or less than $C$ licenses will be allocated. 


\begin{lemma} 
\label{lem.priceceil}
For any distribution $F$ and any cap-and-price auction $M(C,\pf,\pc)$, there exist a cap $C'$ and price floor $\pf'$ such that $W(M(C',\pf',\infty)) \geq \frac{1}{2}W(M(C,\pf,\pc))$.
\end{lemma}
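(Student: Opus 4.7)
My plan is to construct two candidate cap-and-price auctions without price ceilings and argue the better of the two achieves at least half of $W(M)$. Let $A := \{V : \sum_i d_i(\pc) \geq C\}$ be the event on which the ceiling of $M$ binds, and define
\[
M_1 := M(C, \pf, \infty), \qquad M_2 := M(K, \pc, \infty)
\]
for $K$ large enough that case~(3) of $M_2$ never fires. A direct case analysis shows that on any realization in $A^c$ the auctions $M$ and $M_1$ allocate identically, while on any realization in $A$ so do $M$ and $M_2$ (both allocate $d_i(\pc)$ to each firm at price $\pc$). Comparing further to the auxiliary auction $M_3 := M(C, \pc, \infty)$, I would then establish the \emph{pointwise} identity
\[
w(M_1; V) + w(M_2; V) = w(M; V) + w(M_3; V)
\]
for every realization $V$, by splitting $A^c$ according to which of cases~(2) and~(3) of $M$ fires. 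Taking expectations gives $W(M_1) + W(M_2) = W(M) + W(M_3)$, so whenever $W(M_3) \geq 0$ we immediately obtain $\max(W(M_1), W(M_2)) \geq W(M)/2$ by averaging.

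Next I would argue that $W(M_3) \geq 0$ whenever $W(M) > 0$; the case $W(M) \leq 0$ is handled separately by the trivial ``sell-nothing'' auction $M(1,\infty,\infty)$, which has welfare $0 \geq W(M)/2$. The key ingredient is the pointwise monotonicity of $g(j) := v(j) - q(j)$, which is non-increasing in $j$ by concavity of $V$ and convexity of $Q$. Using this, if $w(M_3; V) \leq 0$ on some realization then, since the partial sum $\sum_{j=1}^{k_3} g(j) \leq 0$ of a non-increasing sequence forces $g(k_3) \leq 0$ (where $k_3 := \min(C, \sum_i d_i(\pc))$), we get $g(j) \leq 0$ for all $j \geq k_3$. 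Because $M$'s allocation always extends $M_3$'s allocation only by licenses in this tail, this gives $w(M;V) \leq w(M_3;V) \leq 0$ on such realizations. Hence $\{w(M_3) \leq 0\} \subseteq \{w(M) \leq 0\}$ pointwise, and additionally the concavity/convexity ratio bound $w(M;V) \leq (k_M/k_3) \cdot w(M_3;V)$ controls $w(M)$ by $w(M_3)$ on the complement.

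The main technical obstacle I anticipate is aggregating these pointwise bounds into the expectation-level inequality $W(M_3) \geq 0$ in the regime $W(M) > 0$. Roughly, I need to show that the positive contributions to $W(M_3)$ on $\{w(M_3) > 0\}$ outweigh the negative contributions on $\{w(M_3) \leq 0\}$; the fact that each realization in the latter event contributes an even more negative amount to $W(M)$ (by the pointwise containment) is what prevents the pathological combination $W(M) > 0$ and $W(M_3) < 0$. If that aggregation step turns out to be intricate, I would fall back on a finer decomposition of $W(M)$ by event or choose $\pf'$ more conservatively so that the resulting auction is directly a safe-price auction with manifestly non-negative welfare on every realization.
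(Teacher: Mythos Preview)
Your two candidates $M_1=M(C,\pf,\infty)$ and $M_2=M(\infty,\pc,\infty)$ are exactly the auctions the paper uses, and your pointwise identity $w(M_1;V)+w(M_2;V)=w(M;V)+w(M_3;V)$ with $M_3=M(C,\pc,\infty)$ is correct and is a cleaner packaging of the paper's split into ``first $C$ licenses'' versus ``beyond $C$.''  The crux in both arguments is the same: you need $W(M_3)\ge 0$ whenever $W(M)>0$, while the paper needs the equivalent statement that including the first-$C$ contribution on the ceiling-binding event cannot lower expected welfare once part~(b) is nonnegative.

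This step is false as stated, and the pointwise bounds you list cannot be aggregated to recover it.  Take one firm, linear cost $Q(x)=10x$, parameters $C=1$, $\pf=0.1$, $\pc=5$, and two realizations: with probability $0.6$ the marginals are $(9,9,0,\dots)$; with probability $0.4$ they are $(11,10.1,\dots,10.1,0,\dots)$ with one hundred entries equal to $10.1$.  Both realizations lie in $A$.  Direct computation gives $W(M)=0.6(-2)+0.4(11)=3.2>0$ while $W(M_3)=0.6(-1)+0.4(1)=-0.2<0$, and hence also $W(M_1)+W(M_2)=3.0<W(M)$.  The failure is precisely the one you anticipated: on $\{w(M_3)>0\}$ you only have $w(M_3)\ge (k_3/k_M)\,w(M)$, and $k_3/k_M$ can be made arbitrarily small, so a realization that contributes a great deal to $W(M)$ may contribute almost nothing to $W(M_3)$, whereas on $\{w(M_3)\le 0\}$ the two contributions are comparable.  (There is also the edge case $k_3=0$, where $w(M_3)=0$ yet $w(M)$ can be strictly positive, so your pointwise containment $\{w(M_3)\le 0\}\subseteq\{w(M)\le 0\}$ already fails.)  The paper's own proof justifies the analogous inequality only via the per-license monotonicity of $g(j)=v(j)-q(j)$ and so shares this gap; your suggested fallbacks are not concrete enough to close it, and a genuinely different idea appears to be needed here.
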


The proof of Lemma~\ref{lem.priceceil} appears in Appendix~\ref{apppf:lem.priceceil}.  The main idea is to decompose the expected welfare of a cap-and-price auction $M(C,\pf,\pc)$ into the welfare attained from the first (at most) $C$ licenses sold plus the incremental welfare attained from any licenses sold after the first $C$.  The first can be approximated by $M(C,\pf,\infty)$; the latter, if non-negative, can be approximated by $M(\infty,\pc,\infty)$. 

The second way safe-price auctions restrict price-and-cap auctions is with a lower bound on the price floor.  As the following example shows, the equilibrium welfare of cap-and-price auctions, even those with an infinite price ceiling, suffer from a problem known as demand reduction in which a bidder improves her price, and hence utility, by asking for fewer units. This can have drastic consequences on welfare, causing it to become negative, due to the existence of the social cost. In particular, this means the approximation of such auctions, relative to the best welfare attainable in non-strategic settings, is unbounded.   

\begin{example}
	\label{ex:poaunbounded}
	Consider two agents, $1$ and $2$.  Their distributions over valuations will be point-masses, so that the valuations are actually deterministic.  These valuation functions are given by the following marginals: $v_1(1) = 10$, $v_1(2) = 10$, $v_2(1) = 6$, $v_2(2) = 1$, and all other marginals are $0$.  We will have $C = 2$, $\pf = 0$, $\pc=\infty$, and the social cost function is given by $Q(x) = 9x$.
	Under truthful reporting, the auction $M(C,\pf)$ allocates $2$ licenses to agent $1$ at a price of $6$ each, resulting in a welfare of $V_1(2) - Q(2) = 20 - 18 = 2$.  However, we note that firm $1$ could improve their utility from $8$ to $9$ by instead reporting a modified valuation $\tilde{V}$ given by $\tilde{v}_1(1) = 10$ and $\tilde{v}_1(2) = 1$.  If agent $2$ continues to report truthfully, $M(C,\pf)$ will allocate $1$ license to each agent at a price of $1$ each, resulting in a welfare of $V_1(1) + V_2(1) - Q(2) = 16 - 18 = -2$.  One can verify that this is indeed a pure Nash equilibrium of the auction, and hence a Bayes-Nash equilibrium as well.
\end{example}

The issue illustrated in Example~\ref{ex:poaunbounded} is that strategic behavior can cause licenses to be allocated to agents whose marginal values for these licenses are below $Q(C) / C$, causing the aggregate value derived from allocating $C$ licenses to fall below $Q(C)$.  Safe-price auctions seek to prevent this by imposing a sufficiently high price floor.
%

Indeed, the following lemma shows that price floors do indeed circumvent the issue in the above example.  Namely, we show safe-price auctions have good equilibria, compared to their own non-strategic welfare.  This is the main result of this section, and motivates us to focus on analyzing the non-strategic welfare of safe-price auctions.


\begin{lemma} 
	\label{lem:safeprice}
	For any safe-price auction $M(C)$, the expected welfare at any Bayes-Nash equilibrium is at least $(\tfrac{1}{3.15}) \cdot W(M(C))$.  
\end{lemma}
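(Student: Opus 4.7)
The plan is to reduce the safe-price auction to a standard uniform-price auction without social costs, invoke the known Bayes--Nash price of anarchy bound for such auctions, and translate back to the original welfare. I would begin by defining transformed valuations $\hat{V}_i(x) = V_i(x) - P(C)\cdot x$, which remain concave. The key observation is that the safe-price auction $M(C)$ run on $V$ is outcome-equivalent to the standard uniform-price auction with cap $C$ and no reserve run on $\hat{V}$: each bidder's utility is preserved, since $V_i(x_i) - p\cdot x_i = \hat{V}_i(x_i) - (p - P(C)) x_i$, and the price floor $P(C)$ corresponds to a reserve of $0$ on the transformed marginals. Hence the Bayes--Nash equilibria of the two auctions are in one-to-one correspondence, and the truthful allocation of $M(C)$ coincides with the welfare-maximizing allocation of at most $C$ items in the transformed instance, since $M(C)$ truthfully allocates exactly the items with $\hat{v}_i(j)\geq 0$, up to the cap.

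Applying the Bayes--Nash PoA bound of $3.15$ from~\cite{MarkakisT15,deKeijzerMST13} to the transformed instance yields, at any BNE producing allocation $x^{\mathrm{BNE}}$, $\E\left[\sum_i \hat{V}_i(x_i^{\mathrm{BNE}})\right] \geq \tfrac{1}{3.15}\cdot \E\left[\sum_i \hat{V}_i(x_i^{M(C)})\right]$. To translate back, I would use convexity of $Q$ together with $Q(0) = 0$: since at most $C$ licenses are ever sold in a safe-price auction, $Q(k) \leq (k/C)\, Q(C) = P(C)\, k$ for every $k \leq C$. Consequently, the original welfare at any equilibrium satisfies $\sum_i V_i(x_i^{\mathrm{BNE}}) - Q(K^{\mathrm{BNE}}) \geq \sum_i \hat{V}_i(x_i^{\mathrm{BNE}})$, and combining with the PoA lower bound produces an expected BNE welfare of at least $\tfrac{1}{3.15}\cdot \E[\sum_i \hat{V}_i(x_i^{M(C)})]$.

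The main obstacle I anticipate is the gap between $\E[\sum_i \hat{V}_i(x_i^{M(C)})]$ and the benchmark $W(M(C))$, which exceeds the former by the nonnegative safe-price ``savings'' term $\E[P(C) K^{M(C)} - Q(K^{M(C)})]$ coming from the convexity slack in $Q$. When truthful allocation sells all $C$ licenses this savings vanishes and the PoA step alone delivers the bound. When truthful sells strictly fewer than $C$ licenses, the auction enters the reserve-binding case of Definition~\ref{def:capandprice}, the clearing price is pinned to $P(C)$ independently of the bids, and no bidder can reduce the price by shading demand, so truthful is a best response on that realization. I expect to combine this dichotomy with the same savings inequality $Q(K^{\mathrm{BNE}}) \leq P(C) K^{\mathrm{BNE}}$ applied to the equilibrium allocation, letting the BNE welfare absorb both the PoA contribution and the savings term to yield the $\tfrac{1}{3.15}$ approximation.
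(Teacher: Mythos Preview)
Your reduction is exactly the paper's: subtract $P(C)$ from every marginal to obtain $\hat V_i$, observe that the safe-price auction on $V$ is strategically equivalent to the cap-$C$ uniform-price auction with reserve $0$ on $\hat V$, and then invoke the $3.15$ Bayes--Nash price-of-anarchy bound from~\cite{deKeijzerMST13}. You also use the same convexity step $Q(k)\le P(C)\,k$ for $k\le C$ to pass from $\hat V$-welfare back to true welfare at equilibrium. Up to this point you and the paper coincide.

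Where you diverge is that you explicitly flag the ``savings'' gap: the PoA step only yields
\[
\E\!\left[\textstyle\sum_i V_i(x_i^{\mathrm{BNE}})-Q(K^{\mathrm{BNE}})\right]\;\ge\;\tfrac{1}{3.15}\,\E\!\left[\textstyle\sum_i \hat V_i(x_i^{M(C)})\right],
\]
and the right-hand side is $W(M(C))$ \emph{minus} the nonnegative term $\E[P(C)K^{M(C)}-Q(K^{M(C)})]$, not $W(M(C))$ itself. The paper's proof does not close this gap either; its final clause ``the optimum achievable under the modified valuations, which is at most $W(M(C))$'' actually points the inequality in the wrong direction for the stated conclusion. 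So the obstacle you anticipate is real, and the paper simply asserts the result past it.

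Your proposed resolution, however, does not go through as written. The dichotomy ``if truthful sells fewer than $C$ then the price is pinned at $P(C)$, so truthful is a best response on that realization'' conflates realization-level reasoning with the interim best-response condition that defines a Bayes--Nash equilibrium. A bidder of a fixed type chooses one bid curve to maximize \emph{expected} utility over $V_{-i}$; she may rationally shade on her high marginals to lower the clearing price on realizations where demand meets the cap, and that same shaded bid then under-allocates her on realizations where the reserve binds. So on reserve-binding realizations the BNE allocation can be strictly smaller than the truthful one, and the savings term $P(C)K^{\mathrm{BNE}}-Q(K^{\mathrm{BNE}})$ need not dominate $\tfrac{1}{3.15}$ of the truthful savings. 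Your closing sentence about ``letting the BNE welfare absorb both the PoA contribution and the savings term'' does not supply the missing inequality.
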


\begin{proof} 
	Fix a license cap $C$ and write $\pf = Q(C)/C$ for the corresponding safe price, and recall that $M(C) = M(C,\pf)$.  We claim that this auction is strategically equivalent to a modified auction, as follows.  First, since each agent pays at least $\pf$ per license, we can define $\hat{v}_i(j) = v_i(j) - \pf$ for the residual marginal utility of agent $i$ for their $j^{\mathrm{th}}$ license, after taking into account that they must pay at least $\pf$.  Then from each agent's perspective, playing in $M(C,\pf)$ with marginal values $v_i(j)$ gives the same utility as playing in $M(C,0)$ with marginal values $\hat{v}_i(j)$.  The welfare generated by the original auction is $V(x) - Q(x)$, which is equal to $\hat{V}(x) - ( Q(x) - \pf x)$.  So the welfare generated by $M$ is equivalent to the welfare generated with modified valuations $\hat{V}$, where the marginal social cost of each unit of pollution is reduced by $\pf$.  But now note that from the definition of $\pf$, $Q(x) - \pf x \leq 0$ for all $x \leq C$.  So in an auction with a cap of $C$, the welfare is at least the welfare obtained with a social cost of $0$; it is always non-negative.
	
	We conclude that the welfare generated by $M(C, \pf)$, at any equilibrium, is equal to the welfare generated at an equilibrium of the standard uniform price auction with modified valuations $\hat{V}$.  Theorem 3 of \cite{deKeijzerMST13} (which bounds the welfare in Bayes-Nash equilibria of uniform auctions without costs) therefore applies\footnote{The proof of the 3.15 bound in \cite{deKeijzerMST13} is actually for a uniform-second-price auction that charges the highest losing bid, or $V(C+1) - V(C)$; however, the proof is also correct for charging a price of the lowest winning bid $V(C) - V(C-1)$.}, and the welfare at any equilibrium is at least $\tfrac{1}{3.15}$ times the optimum achievable under the modified valuations, which is at most $W(M(C))$.
\end{proof}



\section{Welfare Approximation}
\label{sec:main}

Motivated by Lemma~\ref{lem:safeprice}, we would like to show that the non-strategic welfare of safe-price auctions approximates that of cap-and-price auctions.  Unfortunately, there are cases where this does not hold, as demonstrated by Example~\ref{ex:logscale2}. 
The example consists of just a single firm that dominates the market and whose demand for licenses has high variance.  The firm's value is always large enough that even at a low price and high license cap, the resulting allocation has high net welfare.  However, safe-price auctions will generate much lower welfare.
Roughly speaking, for any number of licenses $C$ that the auction designer selects as the cap for a safe-price auction, significant welfare will be lost from realizations where the firm's demand is much higher than $C$, and the corresponding safe price will exclude welfare gains from realizations where the firm's demand is much lower than $C$.  If the variance is high enough, this will cause overall welfare to be low regardless of the choice of $C$.



\begin{example} \label{ex:logscale2}
	We will present an example for which the expected welfare of any safe-price auction $M(C)$ is at most an $O(1/n)$-approximation to $\opt$.
	Let $Q(x) = x^2$.  There is a single firm participating in the auction.  That firm's valuation curve is drawn according to a distribution $F$ over $n$ different valuation curves, which we'll denote $V^{(1)}, \ldots, V^{(n)}$.  For all $i$, valuation $V^{(i)}$ is defined by $V^{(i)}(x) = \max\{ 2^{i+1} \cdot x, 2^{2i+1} \}$.  The probability that $V^{(i)}(\cdot)$ is drawn from $F$  is proportional to $\frac{1}{2^{2i}}$.  That is, the firm has valuation $V^{(i)}$ with probability $\frac{1}{\beta 2^{2i}}$, where $\beta = \sum _{i=1} ^n 2^{-2i} = \frac{1}{3} (1-2^{-2n})$ is the normalization constant.
	Note that a cap-and-price auction with cap $C = \infty$ and price floor $\pf = 1$ achieves welfare
	$\sum _{i=1} ^n 2^{2i} \frac{1}{\beta 2^{2i}} = \frac{1}{\beta} \cdot n,$
	so $\opt$ is at least this large.\footnote{In fact, this is the ``first-best'' welfare obtainable at every possible realization, so this is actually the best possible cap-and-price auction and hence the exact value of $\opt$.}
	
	
	Consider a safe-price auction with cap $C'$.  Suppose that $C' \in (2^k, 2^{k+1}]$ for some $k \geq 1$.  Then the safe price is $P(C') = Q(C') / C' = (C')^2/C' = C' > 2^k$.  If the firm has valuation $V^{(i)}$ with $i \leq k-1$, then all marginal values are strictly below $P(C')$, then no licenses are allocated and the welfare generated is $0$.  On the other hand, if the firm has valuation $V^{(i)}$ for any $i \geq k$, then since at most $C' \leq 2^{k+1}$ licenses can be purchased, the auction generates welfare at most $2^{i+1} (2^{k+1}) - (2^{k+1})^2 = 2^{2k+2}(2^{i-k} - 1)$.

	The total expected welfare of $M(C')$ is therefore at most
	\begin{align*}
	\sum _{i=k}^n 2^{2k+2} (2^{i-k}-1) \cdot \frac{1}{\beta 2^{2i}} &= \frac{4}{\beta} \sum _{i=k} ^n  2^{-2(i-k)} (2^{i-k}-1) 
	\leq \frac{4}{\beta} = O(1).
	\end{align*}
	Since $\opt \geq \frac{1}{\beta} \cdot n$ but $W(M(C')) = O(1)$ for all $C'$, we conclude that no auction with a safe price achieves an $o(n)$-approximation to $\opt$.  This concludes the example.
\end{example}



Example~\ref{ex:logscale2} is driven by a single firm that dominates the market and has high variance in their demand.  We next show that this is the \emph{only} barrier to a good approximation: either a safe-price auction is constant-approximate, or else one can approximate the optimal welfare by selling to just a single firm.  To this end, we will be interested in the expected maximum welfare attainable by allocating licenses to just one firm, which we will denote $W^{(1)}$.  That is,
\[ W^{(1)} = \E_{V \sim F}\left[ \max_{i, x \geq 0} \{ V_i(x) - Q(x) \} \right]. \]
The following theorem is our main result.


\begin{theorem}
\label{thm:main}
There exists a constant $c$ such that, for any cap $C$ and price floor $\pf$, there exists $C'$ such that $c \cdot W(M(C')) + W^{(1)} \geq W(M(C,\pf))$.
\end{theorem}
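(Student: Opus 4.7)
The plan is to first apply Lemma~\ref{lem.priceceil} to reduce to auctions with $\pc = \infty$, absorbing a factor of $2$ into the final constant $c$. I then split into two regimes based on whether $\pf \geq P(C) = Q(C)/C$ or $\pf < P(C)$.

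In the easy regime $\pf \geq P(C)$, the auction $M(C,\pf)$ already satisfies the relaxed safe-price condition from the footnote following the definition of safe-price auctions, so I can take $C' = C$ and interpret $M(C')$ as this safe-price auction; the claim follows directly, without needing the $W^{(1)}$ slack.

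The interesting regime is $\pf < P(C)$. Let $C^*$ be the largest integer with $P(C^*) \leq \pf$ (so $C^* < C$), and for each realization $V$ let $X_V$ denote the number of licenses sold under truth-telling. Writing $q(j) = Q(j) - Q(j-1)$ and $v^{(j)}(V)$ for the $j$-th highest marginal bid, the non-strategic welfare of $M(C,\pf)$ on $V$ decomposes as $\sum_{j=1}^{X_V}(v^{(j)}(V) - q(j))$. I split this sum at $j = C^*$ into a \emph{bulk} contribution (the first $\min(X_V, C^*)$ terms) and an \emph{excess} contribution (the remaining terms, nonempty only when $X_V > C^*$). The bulk is bounded in expectation by $W(M(C^*))$, because $M(C^*)$ with floor $\pf \geq P(C^*)$ is a safe-price auction that captures the same top $\leq C^*$ marginal bids above $\pf$.

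The principal obstacle is the excess contribution. Convexity of $Q$ gives $q(j) \geq P(C^*+1) > \pf$ for $j > C^*$, so every excess term is at most $v^{(j)}(V) - \pf$, and the excess is generated by a narrow band of marginal bids just above $\pf$. To absorb this band, I would argue a dichotomy on each realization: either (i) most of the excess on $V$ is contributed by a single firm, in which case the one-firm welfare $\max_{i,x}(V_i(x) - Q(x))$ covers it and it aggregates into $W^{(1)}$; or (ii) the excess is dispersed across many firms, in which case a safe-price auction with a larger cap $C''$ (say, $C'' = 2 C^*$, or the smallest dyadic cap with $P(C'') > \pf$) captures a constant fraction of it, using the per-realization nonnegativity of safe-price auctions that underlies Lemma~\ref{lem:safeprice}. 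Taking $C'$ to be whichever of $C^*$ or $C''$ yields the stronger bound, and combining with the $W^{(1)}$ absorption from (i), yields the theorem with a universal constant $c$.
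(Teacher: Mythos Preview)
Your proposal has a genuine gap in step (ii) of the dichotomy, and this is exactly the hard part of the theorem. You claim that when the excess is dispersed across many firms, a safe-price auction $M(C'')$ with $C'' \approx 2C^*$ captures a constant fraction of it, invoking only the per-realization nonnegativity that underlies Lemma~\ref{lem:safeprice}. But nonnegativity only gives $W(M(C'')) \geq 0$; it says nothing about $W(M(C''))$ being comparable to the excess. Concretely, the floor of $M(2C^*)$ is $P(2C^*)$, which can be far above $\pf$, so the marginal bids that generate the excess (which are only guaranteed to exceed $\pf$) need not clear that floor at all. More broadly, your threshold $C^*$ is determined purely by $Q$ and $\pf$ and has no dependence on the distribution $F$; that should be a warning sign, since Example~\ref{ex:logscale2} shows the obstruction to safe-price approximation is distributional in nature. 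Your argument never uses independence of the $V_i$ and never uses optimality of $(C,\pf)$, and the paper needs both.

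The paper's route is structurally different. It chooses a threshold $\Cmed$ defined by the \emph{distribution}: the largest integer with $\Pr[d(V) \geq \Cmed] \geq 1 - 1/e$. Independence then forces $\prod_i \Pr[d_i \leq \Cmed] \geq 1/e$, which implies the expected number of firms with individual demand exceeding $\Cmed$ is at most $1$; this is what lets $W^{(1)}$ absorb the ``large-firm'' contribution (your step~(i) is fine for a single dominant firm on a fixed realization, but you have no mechanism to control how many firms can be large in expectation). The remaining ``small'' firms have total demand concentrating near $\Cmed$, so a safe cap of $\Cmed$ only loses a constant on the high-marginal part. Finally, the low-marginal part (below $P(\Cmed)$) is bounded via Lemma~\ref{lem.unsafe.points} together with a quantitative weakening of Lemma~\ref{lem.opt} at $\Cmed$, which in turn relies on $(C,\pf)$ being welfare-optimal (otherwise raising the floor to $P(\Cmed)$ would strictly improve). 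None of these ingredients appear in your plan, and step~(ii) cannot be repaired without something playing their role.
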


Note that a corollary of Theorem~\ref{thm:main} (combined with Lemmas~\ref{lem.priceceil} and~\ref{lem:safeprice}) is that for any cap-and-price auction $M(C,\pf,\pc)$, there exists some $C'$ such that the expected welfare of any Bayes-Nash equilibrium of $M(C')$ is at least a constant factor of $W(M(C,\pf,\pc)) - W^{(1)}$.  That is, the worst-case equilibrium welfare is a constant-factor approximation to the best welfare achievable by any cap-and-price auction under truth-telling, excluding the welfare contribution of a single firm.

Also, $W^{(1)}$ is the expected welfare of a truthful mechanism that we will call $M^{(1)}$.  In $M^{(1)}$, the firms first participate in a second-price auction for the right to buy licenses.  The firm with the highest bid wins, and they pay the second-highest bid.  The winning firm can then purchase any number of licenses $x \geq 0$, for which they pay $Q(x)$ (in addition to their payment from the initial second-price auction).  Since the utility obtained by firm $i$ if they win the initial auction is precisely $\max_x \{ V_i(x) - Q(x) \}$, and since a second-price auction is truthful and maximizes welfare, we can conclude that $W(M^{(1)}) = W^{(1)}$.  Then another corollary of Theorem~\ref{thm:main} is that one can approximate the non-strategic welfare of any cap-and-price auction using either a safe-price auction (in which case the approximation holds at any Bayes-Nash equliibrium), or using the (truthful) mechanism $M^{(1)}$ that allocates licenses to at most one firm.

\begin{figure}
	\begin{minipage}[t]{0.45\textwidth}
		\centering
		\includegraphics[scale=.262]{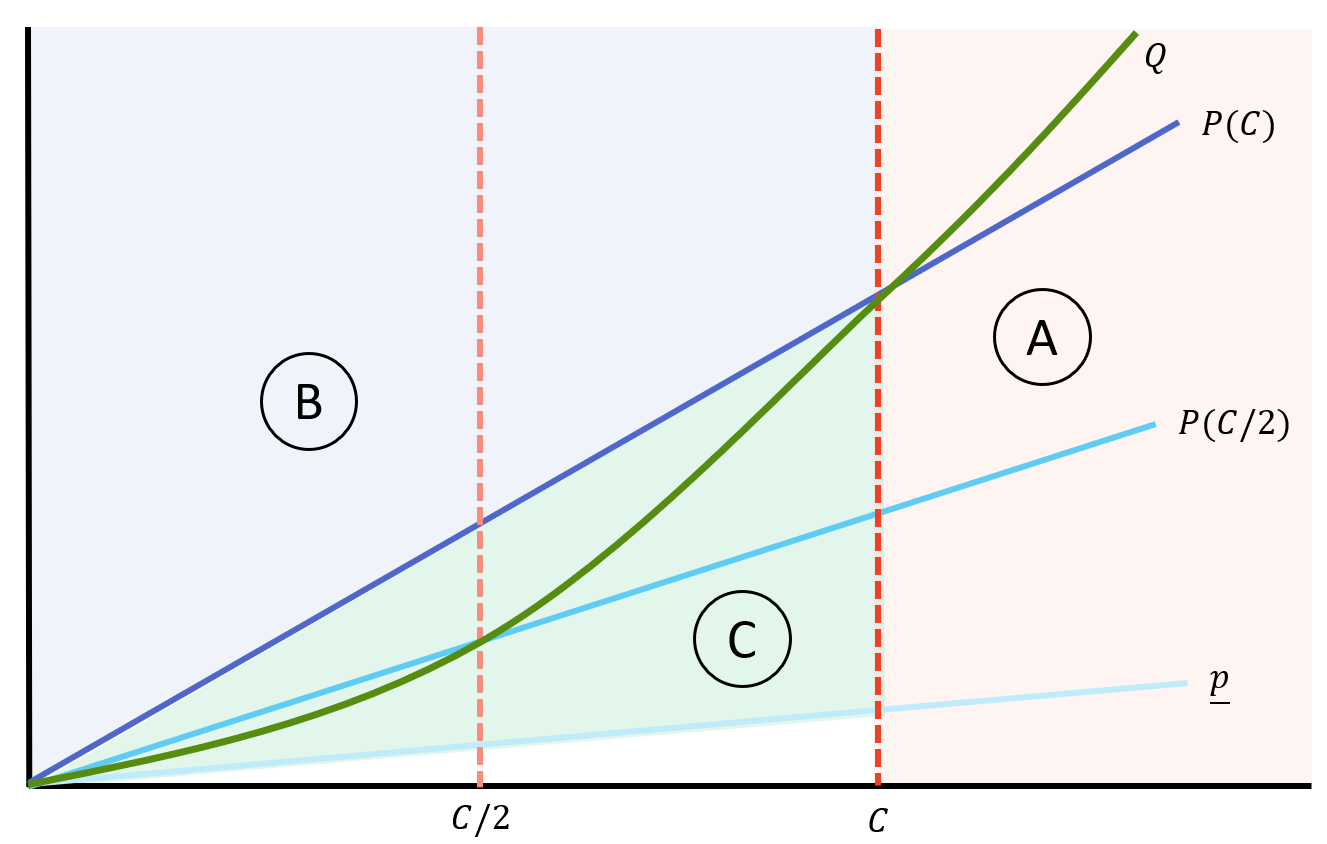}\\(a)
	\end{minipage}
	\hfill
	\begin{minipage}[t]{0.45\textwidth}
		\centering
		\includegraphics[scale=.262]{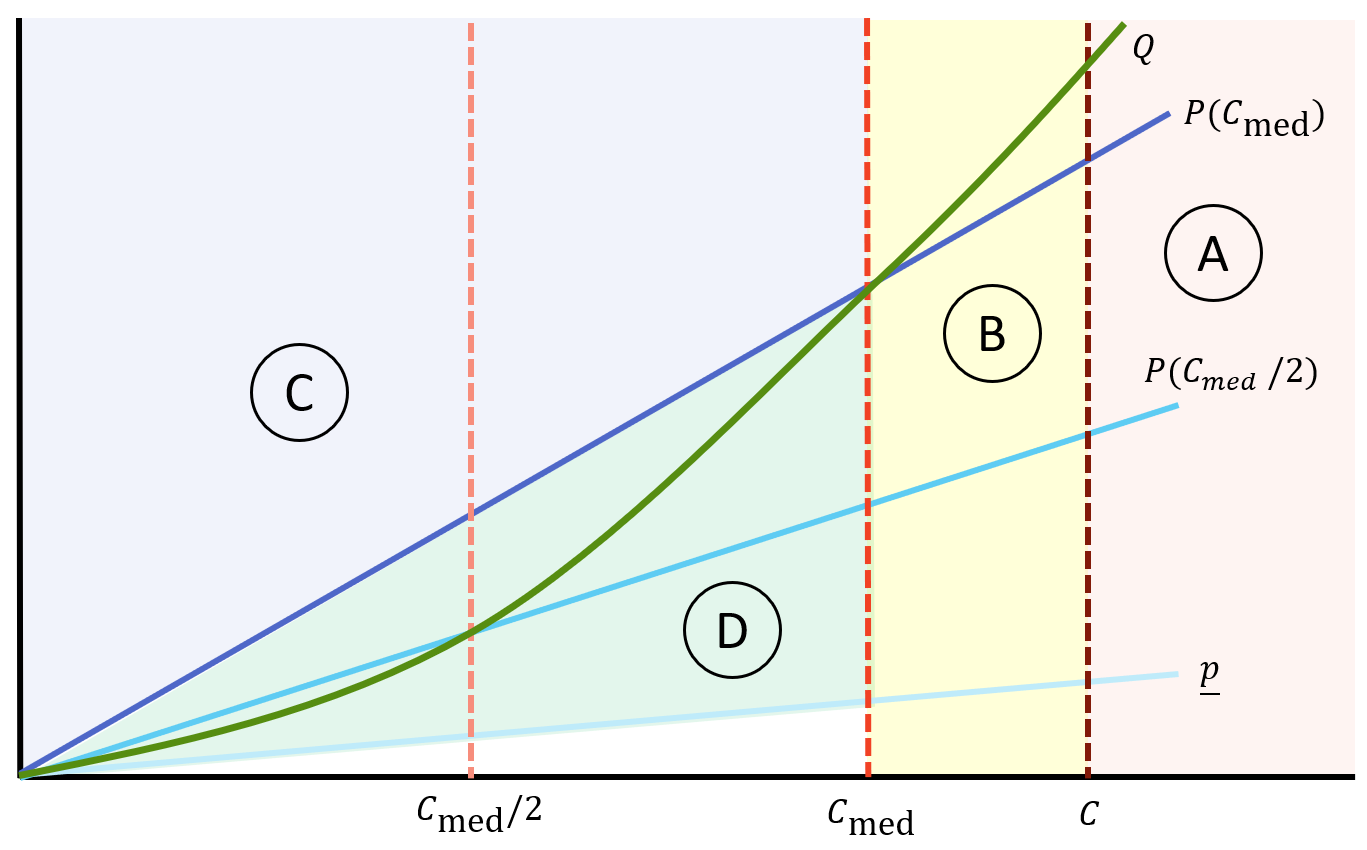}\\(b)
	\end{minipage}
	\caption{A visualization of the proofs of Theorem~\ref{thm.q} (a) and Theorem~\ref{thm:main} (b).  The $x$-axis represents quantity of licenses; the $y$-axis represents value.  The expected welfare of benchmark $W(M(C,\pf))$ is divided into separate contributions, based on the location of allocation outcome $(x,V(x))$; these are depicted as shaded regions.  Each of these contributions is bounded by either the welfare of a safe-price auction or the welfare obtained from a single buyer.  Figure (a) is the (simpler) partition used in Theorem~\ref{thm.q}, and (b) is the partition used in Theorem~\ref{thm:main}.}  \label{fig:independent}
\end{figure}


We are now ready to prove Theorem~\ref{thm:main}.  Fix a cost function $Q$ and distribution $F$, and choose the optimal cap $C$ and price floor $p$ such that $W(M(C,p))$ is maximized.  Let $d(\cdot)$ and $x(\cdot)$ be the demand and allocation functions for $M(C,p)$.  For any given realization of preferences $V$, there is a total demand $d(V) = \sum_i d_i(p)$, where again, $d_i(p) = \argmax_x V_i(x) - px$.  Recall also that $W(x,V) = V(x) - Q(x)$ is the welfare generated by allocation $x$ given aggregate valuation $V$.  

We will begin by proving a simpler version of Theorem~\ref{thm:main}, which is parameterized by $q := \Pr[ d(V) \geq C ]$, the probability that the auction sells $C$ licenses.  When $q$ is bounded away from $0$, we show that there is a safe-price auction that achieves an $O(1/q)$-approximation to $W(M(C,p))$.  We note that this result does not require firm valuations to be independent, and holds even if the valuations are drawn from an arbitrarily correlated distribution $F$.  We will explain in Section~\ref{sec:general} how to extend the argument to the general case of Theorem~\ref{thm:main}, with details deferred to the appendix.

\begin{theorem}
\label{thm.q}
For the welfare-optimal cap $C$ and price floor $p$, there exists a cap $C'$ such that $(1 + 2/q) \cdot W(M(C')) \geq W(M(C,p))$.
\end{theorem}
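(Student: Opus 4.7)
The plan is to decompose the benchmark welfare based on whether all $C$ licenses are sold. Let $A=\{d_p(V)\ge C\}$, which has probability $q$, and let $B$ be its complement. Then
\[
W(M(C,p)) \;=\; \E[(V(C)-Q(C))\,\mathbf{1}_A] + \E[(V(d_p)-Q(d_p))\,\mathbf{1}_B] \;=:\; qW_A+(1-q)W_B.
\]
I will pick a single cap $C'$ and show that $W(M(C'))$ absorbs the $A$-contribution with a factor of $1$, and absorbs the $B$-contribution with a factor of $2/q$, yielding the claimed $(1+2/q)$ bound.

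My candidate is $C':=C$ in the case $p\ge P(C)=Q(C)/C$ (so that the benchmark's own price floor is already safe), and otherwise the largest integer with $P(C')\le p$. Either way, the safe price $P(C')$ does not exceed $p$, so on every realization in $A$ there are at least $C'$ marginal bids above $P(C')$, and hence $M(C')$ also allocates its full cap, earning welfare $V(C')-Q(C')\ge 0$ (the non-negativity coming from $Q(x)/x$ being non-decreasing in $x$, which follows from convexity of $Q$ with $Q(0)=0$). Together with the pointwise non-negativity of safe-price welfare on event $B$ (from the proof of Lemma~\ref{lem:safeprice}), this yields $W(M(C')) \ge \E[(V(C')-Q(C'))\mathbf{1}_A] \ge qW_A$, which handles the first contribution with the promised factor of $1$.

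The second step is to show $(1-q)W_B \le \tfrac{2}{q}W(M(C'))$, and this is the crux of the argument. The idea is that on event $B$ the benchmark sells $d_p<C$ licenses whose marginal values lie in $[p,\infty)$, while on event $A$ it sells $C$ licenses whose marginal values all exceed those on $B$. A pointwise comparison between $V(d_p)-Q(d_p)$ on $B$ and $V(C')-Q(C')$ on $A$, together with monotonicity of $V$ and the concavity/convexity structure of $V$ and $Q$, gives $W_B \le cW_A$ for an explicit constant $c$; combining this with $W(M(C'))\ge qW_A$ yields the desired $(1-q)W_B \le \tfrac{2}{q}W(M(C'))$ bound. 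Adding the two contributions produces $(1+2/q)W(M(C'))\ge qW_A+(1-q)W_B=W(M(C,p))$.

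The main obstacle is the second step: the $2/q$ factor is intrinsic because event $A$ only has probability $q$, and the safe-price auction's welfare on event $B$ can be smaller than the benchmark's (the benchmark can sell below the safe price, and by convexity of $Q$ the safe-price auction may even lose welfare by allocating slightly more units than $d_p$ at higher marginal cost). The pointwise comparison between the two events must be done carefully, leveraging the optimality of the benchmark parameters $(C,p)$ to ensure that the slack on $B$ is compensated by the amplified $A$-contribution.
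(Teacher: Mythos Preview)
Your decomposition into events $A=\{d(V)\ge C\}$ and $B=\{d(V)<C\}$ matches the paper's starting point, but both steps you outline have genuine gaps.

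\textbf{First step.} In the case $p<P(C)$ you pick $C'$ to be the largest integer with $P(C')\le p$, so $C'<C$. You then assert $\E[(V(C')-Q(C'))\mathbf{1}_A]\ge qW_A=\E[(V(C)-Q(C))\mathbf{1}_A]$, i.e.\ that $V(C')-Q(C')\ge V(C)-Q(C)$ on average over event $A$. This is false in general. Take $Q(x)=x^2$, $p=1$, $C=100$, and a realization on $A$ with constant marginal value $200$ for each of the first $100$ units. Then $C'=1$, $V(1)-Q(1)=199$, while $V(100)-Q(100)=10000$. So the inequality fails badly. (The paper avoids this by always taking the cap $C$ itself for this part of the argument; its Claim~\ref{claim.1a.1b} shows $M(C)$ captures the $A$-contribution \emph{and} the portion of the $B$-contribution coming from marginal values above $P(C)$.)

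\textbf{Second step.} The assertion that a ``pointwise comparison'' yields $W_B\le cW_A$ for some explicit constant $c$ is the crux, and it is not justified. $W_A$ and $W_B$ are expectations over \emph{disjoint} realization events; there is no a priori relation between them, and optimality of $(C,p)$ only tells you $W_A\ge 0$ (Lemma~\ref{lem.opt}). One can make $W_A$ arbitrarily close to $0$ while $W_B$ stays bounded away from $0$, so no constant $c$ works. The paper does \emph{not} compare $W_B$ to $W_A$; instead it splits the $B$-contribution further into marginal values above versus below $P(C)$. The high-marginal part is absorbed directly by $M(C)$ (Claim~\ref{claim.1a.1b}). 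The low-marginal part is bounded \emph{absolutely} by a geometric quantity $C\cdot(P(C)-P(C/2))$ via Lemma~\ref{lem.unsafe.points}, and then Lemma~\ref{lem.opt} is used to show that $M(C/2)$, on event $A$, generates at least $q/2$ times that quantity. The final bound is $W(M(C))+(2/q)W(M(C/2))\ge W(M(C,p))$, so $C'$ is whichever of $C$ or $C/2$ has larger safe-price welfare.

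In short, the missing idea is the further split of the $B$-contribution at the threshold $P(C)$, together with the use of a \emph{second} cap $C/2$ and the optimality lemma to handle the low-marginal piece.
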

\begin{proof}

%

%

The welfare generated by $M(C,p)$ can be broken down as follows: 
\begin{align*}
W(M(C,p)) &= \Pr_{V \sim F}[d(V) \geq C] \cdot \E[W(x(V),V) \ |\ d(V) \geq C] \\
&\quad + \Pr_{V \sim F}[d(V) < C] \cdot \E[W(x(V),V) \ |\ d(V) < C]. 
\end{align*}
The first term is the contribution to the welfare from events where $C$ licenses are sold (or, equivalently, at least $C$ licenses are demanded).  The second term is the contribution from events where the total demand for licenses is less than $C$.

Consider price $P(C)$, the safe price for quantity $C$.  We wish to decompose the second term in the expression above into marginal values above $P(C)$ and those below $P(C)$.  For this, we need to introduce some notation.  Write $\theta_i$ for the largest $j \geq 1$ such that $v_i(j) \geq P(C)$, or $\theta_i = 0$ if $V_i(1) < P(C)$.  
Write $x^{>}_i = \min\{x_i, \theta_i\}$, and write $x^{<}_i = x_i - x^{>}_i$.  
Then $x_i^{>}$ is the part of allocation $x_i$ for which firm $i$ has marginal value at least $P(C)$ per unit, and $x_i^{<}$ is the part of $x_i$ for which firm $i$ has a marginal value less than $P(C)$ per unit.
We'll also define $V_i^{<}$ as $V_i^{<}(x) = V_i^{<}(x | \theta_i)$; this is the marginal value of $i$ for receiving $x$ additional licenses after already having received $\theta_i$ licenses.
Note then that $V_i(x_i) = V_i(x^{>}_i) + V_i^{<}(x^{<}_i)$, for all $i$.  
We then have
\begin{align*}
& \Pr_{V \sim F}[d(V) < C] \cdot \E[W(x(V),V) \ |\ d(V) < C] \\
& \quad = \int_{V \colon d(V) < C} \left(\sum_i V_i(x_i(V)) - Q(x(V))\right) dFV \\
& \quad \leq \int_{V \colon d(V) < C} \left(\sum_i V_i(x_i^{>}(V)) - Q(x^{>}(V))\right) dFV \\
& \quad \quad \quad +  \int_{V \colon d(V) < C} \left(\sum_i V_i^{<}(x_i^{<}(V)) - Q(x^{<}(V)) \right) dFV \\
\end{align*}
where the final inequality used the fact that $Q(x) \geq Q(x^{<}) + Q(x^{>})$, due to the convexity of $Q$.  We conclude that
\begin{align}
W(M(C,p)) &\leq \Pr_{V \sim F}[d(V) \geq C] \cdot \E[W(x(V),V) \ |\ d(V) \geq C] \label{1A}\tag{1A}\\
&\quad +  \int_{V \colon d(V) < C} \left(\sum_i V_i(x_i^{>}(V)) - Q(x^{>}(V)) \right) dFV \label{1B}\tag{1B}\\
&\quad +  \int_{V \colon d(V) < C} \left(\sum_i V_i^{<}(x_i^{<}(V)) - Q(x^{<}(V)) \right) dFV \label{1C}\tag{1C}
\end{align}
so that $W(M(C,p)) \leq \eqref{1A} + \eqref{1B} + \eqref{1C}$.  See Figure~\ref{fig:independent}(a) for an illustration.  

We claim that the welfare obtained from the first two terms, $\eqref{1A}$ and $\eqref{1B}$, are covered by the safe-price auction with cap $C$.  The intuition is that a price floor of $P(C)$ does not interfere with the welfare contribution due to licenses with marginal values greater than $P(C)$.  One subtlety is that some of the licenses in the summation $\eqref{1A}$ might have marginal values less than $P(C)$, but it turns out that it can only improve welfare to exclude such licenses from the allocation.
\begin{claimnum}
\label{claim.1a.1b}
$W(M(C)) \geq \eqref{1A} + \eqref{1B}$.
\end{claimnum}
\begin{proof}
Write $d'$ and $x'$ for the demand and allocation under auction $M(C)$, respectively.  We have
\begin{align*}
W(M(C)) &= \Pr_{V \sim F}[d(V) \geq C] \cdot \E[W(x'(V),V) \ |\ d(V) \geq C] \\
&+ \Pr_{V \sim F}[d(V) < C] \cdot \E[W(x'(V),V) \ |\ d(V) < C]. 
\end{align*}
Note that we intentionally use $d(V)$, the demand for $M(C,p)$, rather than $d'(V)$ in the expressions above.
The second term is precisely \eqref{1B}, since $x'_i = x^{>}_i$ from the definition of $x^{>}$.  We claim that the first term is at least $\eqref{1A}$.  To see why, fix any $V$ with $d(V) \geq C$ (and corresponding allocation $x = x(V)$ where necessarily $|x| = C$), and note that $x'_i \leq x_i$ for all $i$.  Furthermore, $V_i(x'_i) \geq V_i(x_i) - (x_i - x'_i)P(C)$, since $x'_i$ is simply $x_i$ after possibly excluding some items with marginal value less than $P(C)$.  Thus, since $|x| = C$, 
\[ V(x') - |x'| \cdot P(C) \geq V(x) - |x| \cdot P(C) = V(x) - Q(|x|). \]
Also, by convexity, we have $Q(y) \leq y \cdot P(C)$ for all $y \in [0,|x|]$.  In particular, we have $V(x') - Q(|x'|) \geq V(x') - |x'| \cdot P(C) \geq V(x) - Q(|x|)$, and hence
\begin{align*} &\Pr_{V \sim F}[d(V) \geq C] \cdot \E[W(x'(V),V) \ |\ d(V) \geq C] \\
&\geq \Pr_{V \sim F}[d(V) \geq C] \cdot \E[W(x(V),V) \ |\ d(V) \geq C]\end{align*}
as claimed.
\end{proof}

The more difficult part of the proof of Theorem~\ref{thm.q} is to account for term $\eqref{1C}$.  This represents the contribution of licenses whose marginal values lie below $P(C)$, and are therefore excluded when the price floor is set to $\pf = P(C)$.  What we show is that the welfare contribution from all such licenses is approximated by $W(M(C/2))$, the welfare obtained by the safe auction with cap $C/2$.\footnote{For convenience we will assume $C$ is even for the remainder of this section.  When $C$ is odd, the result holds for at least one of the floor or the ceiling of $C/2$.  Details appear in the proof of Theorem~\ref{thm:main} in Appendix~\ref{app:proof.thm.main}.}  We do this in three steps.  First, we show that any optimal cap-and-price auction $M(C,\pf)$ must generate non-negative expected welfare conditional on selling $C$ licenses.  The intuition is that if the expected welfare from selling $C$ licenses is negative, then it must be strictly preferable to reduce the quantity of licenses.  

\begin{lemma}\label{lem.opt}
If cap $C$ and price floor $p$ are chosen to maximize $W(M(C,p))$, then 
\[\E_{V \sim F}[W(M(C,p)) \ |\ d(V) \geq C] \geq 0,\] where $d(V)$ is the total quantity of licenses demanded at price $p$.
\end{lemma}

The proof of Lemma~\ref{lem.opt} appears in Appendix~\ref{apppf:lem.opt}.  Next, for any cap-and-price auction $M$, we provide an upper bound on the total welfare contribution that comes from events where $x < C$ licenses are sold, and the average marginal value of the sold licenses, $V(x) / x$, is less than $P(C)$.  In fact, it will be useful to state the lemma more generally: we provide a more general bound that applies to any $\tilde{C} \leq C$ and any valuation that is at most $P(\tilde{C}) \cdot x$ (not necessarily $V(x)$).  The lemma bounds the total welfare contribution, $P(\tilde{C}) \cdot x - Q(x)$, by the difference in price between (a) $\tilde{C}$ licenses sold at the safe price for $\tilde{C}$, and (b) the same $\tilde{C}$ licenses sold at the safe price for $\tilde{C}/2$.

\begin{lemma}\label{lem.unsafe.points}
Choose some quantity $\tilde{C}$ and a number of licenses $x \leq \tilde{C}$.  
Then $P(\tilde{C}) \cdot x  - Q(x) \leq \tilde{C} \cdot ( P(\tilde{C}) - P(\tilde{C}/2) )$.
\end{lemma}

The proof of Lemma~\ref{lem.unsafe.points} appears in Appendix~\ref{apppf:lem.unsafe.points}.  We are now ready to bound the contribution of $\eqref{1C}$.  The intuition is as follows.  By Lemma~\ref{lem.unsafe.points}, the contribution from $\eqref{1C}$ is at most (a constant times) the gap between the line $y = P(C) \cdot x$ and the curve $Q(x)$ at point $x = C/2$.  But recall that $\Pr[ d(V) \geq C ] = q$, and Lemma~\ref{lem.opt} implies that, on average, the expected marginal value of licenses allocated subject to this event is at least $P(C)$.  Therefore, if we set a license cap of $C/2$, then with probability at least $q$ all $C/2$ licenses will be sold at an average expected marginal value of at least $P(C)$.  The welfare generated in this case covers the ``gap'' at $C/2$, and hence covers the loss due to excluding licenses with marginal values at most $P(C)$.

\begin{claimnum}
\label{claim.1c}
$W(M(C/2)) \geq \tfrac{1}{2}q \cdot \eqref{1C}$.
\end{claimnum}
\begin{proof}
Write $d'$ and $x'$ for the demand and allocation under mechanism $M(C/2)$, respectively.  Choose any $V$ such that $d(V) \geq C$.  For any such $V$, $d'(V) \leq d(V)$, and is formed by removing items with marginal value at most $P(C/2)$.  In particular, since $x'(V) \leq d'(V)$, we have $V(x'(V)) \geq V(C) - (C-x'(V))\cdot P(C/2)$.  
Noting that $Q(|x'(V)|) \leq |x'(V)| \cdot P(C)$ since $x'(V) \leq x(V)$, we have 
\begin{align*}
V(x'(V)) - Q(|x'(V)|) &\geq V(x'(V)) - |x'(V)| \cdot P(C) \\
& \geq V(C) - (C-|x'(V)|)\cdot P(C/2) - |x'(V)| \cdot P(C).
\end{align*}
Taking an expectation over all $V$ such that $d(V) \geq C$, we note that Lemma~\ref{lem.opt} implies $\E[V(C)] \geq C \cdot P(C)$.  We therefore have
\begin{align*}
\E[V(x'(V)) - Q(|x'(V)|)]
&\geq C \cdot P(C) - (C - \E[|x'(V)|]) \cdot P(C/2) - \E[|x'(V)|] \cdot P(C) \\
& = (C - \E[|x'(V)|])(P(C) - P(C/2)) \\
& \geq (C/2) \cdot (P(C) - P(C/2))
\end{align*}
where in the last inequality we used that $|x'(V)| \leq C/2$ by definition.  
Since $\Pr[d(V) \geq C] = q$, we therefore have that
\[ W(M(C/2)) \geq \Pr[ d(V) \geq C] \cdot \E[V(x'(V)) - Q(|x'(V)|) \ |\ d(V) \geq C] \geq q (C/2) \cdot (P(C) - P(C/2). \]
We now claim that $(C/2) \cdot (P(C) - P(C/2) \geq (1C)/2$, completing the proof.  To see why, note that for any $x^* < C$ and any $V$, $V^{<}(x^*) - Q(|x^*|) \leq |x^*| \cdot P(C) - Q(|x^*|)$.  Thus $(1C) \leq \max_{x^*}\{ |x^*| \cdot P(C) - Q(|x^*|) \}$.  But note that for any $x^* > C/2$, we have $Q(x^*) > |x^*| P(C/2)$, and hence $|x^*| \cdot P(C) - Q(|x^*|) \leq |x^*| \cdot (P(C) - P(C/2)) \leq (C) \cdot (P(C) - P(C/2))$.  Also, for any $x^* < C/2$, $Q(|x^*|)$ lies above the line joining $(C/2, Q(C/2))$ and $(C,Q(C))$, and hence $|x^*| \cdot P(C) - Q(|x^*|)$ is at most the distance between $|x^*| \cdot P(C)$ and this line, which is at most $(C) \cdot (P(C) - P(C/2))$.  So in either case we have $\max_{x^*}\{ |x^*| \cdot P(C) - Q(|x^*|) \} \leq C \cdot (P(C) - P(C/2))$ as required.
%
%
\end{proof}


Combining Claim~\ref{claim.1a.1b} and Claim~\ref{claim.1c} we have that $W(M(C)) + (2/q)W(M(C/2)) \geq W(M(C,\pf))$, which completes the proof of Theorem~\ref{thm.q}. 
\end{proof}

As a corollary, Theorem~\ref{thm.q} combined with Lemma~\ref{lem:safeprice} and Lemma~\ref{lem.priceceil} implies that for any cap and price auction $M(C,\pf,\pc)$, there exists a safe-price auction $M(C')$ such that, at any Bayes-Nash equilibrium of $M(C')$, the expected welfare generated is at least $\frac{1}{3.15} \cdot \frac{1}{2} \cdot \frac{1}{1+2/q} \cdot W(M(C,\pf,\pc))$.  In particular, if $q$ is a constant bounded away from $0$, then $M(C')$ obtains a constant fraction of $W(M(C,\pf,\pc))$ at any Bayes-Nash equilibrium.

\subsection{The General Case}

\label{sec:general}
We complete the proof of Theorem~\ref{thm:main} in Appendix~\ref{app:proof.thm.main}.  Here we describe at a high level what steps are needed to complete the argument.  We will focus on the case where $\Pr[ d(V) \geq C ] < 1 - 1/e$, since if $\Pr[ d(V) \geq C ] \geq 1 - 1/e$ then Theorem~\ref{thm:main} follows from Theorem~\ref{thm.q}.

Recall that in Claim~\ref{claim.1c}, we used the assumption that $\Pr[ d(V) \geq C ] = q$ to argue that the welfare gained in $M(C/2)$ in the event that $d(V) \geq C$ covers the welfare lost from marginals lying below $P(C)$, up to a constant factor.  If the probability that $d(V) \geq C$ is small, then this may no longer be true.  To handle this, we consider a reduced cap $\Cmed < C$ set to be the largest integer such that $\Pr[ d(V) \geq \Cmed ] \geq 1 - 1/e$.  Our hope is to reproduce the argument from Claim~\ref{claim.1c}, but substituting $\Cmed$ for $C$.  To this end, we divide the welfare of $M(C,\pf)$ into \emph{four} parts: all welfare under the event that $d(V) > C$; all welfare from individual agents whose demand is at least $\Cmed$; the contribution of marginal values greater than $P(\Cmed)$ (but with individual firms demanding at most $\Cmed$) when $d(V) < C$, and the contribution of marginal values less than $P(\Cmed)$ when $d(V) < C$.  See Figure~\ref{fig:independent}(b) for an illustration.  

As in the proof of Theorem~\ref{thm.q}, the contribution due to events where $d(V) > C$ can be covered by the welfare of $M(C, P(C))$.  

The contribution from agents who individually demand at least $\Cmed$ licenses is a new case that we didn't have to handle in Theorem~\ref{thm.q}.  It is here that we use $M^{(1)}$, allocating to any single agent.  Because the \emph{total} demand is at most than $\Cmed$ with probability at least $1/e$, independence implies that the expected number of agents who demand more than $\Cmed$ licenses is at most $1$, given that the probability that none have demand more $\Cmed$ must be at least $1/e$.  So the total contribution to welfare of all such events is at most $W(M^{(1)})$.

If $d(V) \leq \Cmed$, then the contribution from marginal values that are at least $P(\Cmed)$ can be covered by the welfare of $M(\Cmed, P(\Cmed))$, precisely as in Claim~\ref{claim.1a.1b}. We must also handle the case that $d(V) \in [\Cmed, C]$, and consider the welfare contribution of agents that do not (individually) demand more than $\Cmed$ licenses.  Here we use independence: the total quantity demanded by such ``small'' agents is likely to concentrate, so it is unlikely that the total demand will be larger than $2\Cmed$.  Thus, by imposing a cap of $\Cmed$, we lose at most a constant factor of the welfare from marginal values greater than $P(\Cmed)$.

The final step is to show that there is some $C'$ such that $W(M(C'))$ obtains at least a constant fraction of the welfare generated by marginal values less than $P(\Cmed)$, similarly to Claim~\ref{claim.1c}.  If $\Cmed$ is even then we'll take $C' = \Cmed/2$; otherwise, $C' \in \{ \lfloor \Cmed \rfloor, \lceil \Cmed \rceil\}$.  When proving Claim~\ref{claim.1c}, we used Lemma~\ref{lem.opt} to argue about the welfare generated by events where $d(V) > C$.  Unfortunately, Lemma~\ref{lem.opt} does not extend to $\Cmed$: it could be that the expected welfare generated by $M(\Cmed, P(\Cmed))$, conditional on selling $\Cmed$ licenses, is negative.  However, we \emph{can} prove an upper bound on how negative this expected welfare can be.  After all, if the expected welfare is sufficiently negative sufficiently often, it would be welfare-improving to increase the price floor of $M(C,\pf)$ from $\pf$ to $P(\Cmed)$, contradicting the supposed optimality of $M(C,\pf)$.  This turns out to be enough to prove a bound similar to Claim~\ref{claim.1c}.

Combining these bounds, we can conclude that each of the four parts of the welfare of $M(C, \pf)$ can be covered by either a safe-price auction or by $M^{(1)}$, which completes the proof of the theorem.

\bibliographystyle{plain}
\bibliography{refscarbon}

\newpage
\appendix
\section{Omitted Proofs from Section~\ref{sec:uniformprice}}
\label{sec:proofs3}

\subsection{Proof of Lemma~\ref{lem.priceceil}} \label{apppf:lem.priceceil}

Recall the statement of Lemma~\ref{lem.priceceil}: for any distribution $F$ and any cap-and-price auction $M(C,\pf,\pc)$, there exist a cap $C'$ and price floor $\pf'$ such that $W(M(C',\pf',\infty)) \geq \frac{1}{2}W(M(C,\pf,\pc))$.

\begin{proof}
We show that given any mechanism $M= (C_1, \pf, \pc)$, 
we can construct a mechanism $M'$ with price ceiling $\infty$ 
such that $W(M') \geq \frac{1}{2}W(M)$. 

We can decompose the expected welfare of $M$ into (a) the welfare attained from the first (at most) $C_1$ licenses sold, and (b) the incremental welfare attained from any licenses sold after the first $C_1$.  

Note that the auction $M^1 = (C_1, \pf, \infty)$, which is $M$ but with price ceiling set to $\infty$, achieves welfare precisely equal to the former of these two parts.  This is because $M^1$ always allocates at most $C_1$ licenses, and will allocate them efficiently subject to all marginal values being at least $\pf$.

Next consider the second of these two parts of the welfare of $M$.  If the expected welfare in the second part is negative, then we are already done, so suppose not.  Whenever more than $C_1$ licenses are sold, all licenses are sold at price $\pc$, and therefore have marginal value at least $\pc$.  
Auction $M^2 = (\infty, \pc, \infty)$, with no license cap and with a price floor of $\pc$, will also sell all such licenses that have marginal value at least $\pc$.  We note, however, that $M^2$ additionally also includes the marginal contribution of the first $C_1$ licenses.  But we claim that this contribution is non-negative:
when the event occurs that more than $C_1$ licenses are sold, the marginal contribution of the first $C_1$ licenses to the welfare can only be greater than that of those beyond the first $C_1$.  Thus, since the expected welfare in the second part is non-negative, the welfare is only higher if we also include the contribution of the first $C_1$ licenses whenever more than $C_1$ licenses are sold. 
This is precisely the welfare of auction $M^2 = (\infty, \pc, \infty)$, so the welfare of $M^2$ is therefore at least that of the second of the two parts of the welfare of $M$.

We conclude that $W(M^1) + W(M^2) \geq W(M)$, and hence either $W(M^1)$ or $W(M^2)$ is at least $\frac{1}{2}W(M)$.
\end{proof}

\section{Omitted Proofs from Section~\ref{sec:main}}
\label{sec:proofs4}

\subsection{Proof of Lemma~\ref{lem.opt}} \label{apppf:lem.opt}

First recall the statement of the lemma.  If $C$ and $p$ are chosen to maximize $W(M(C,p))$, then 
$\E[W(V,x(V)) \ |\ d(V) \geq C] \geq 0.$

\begin{proof}
Suppose not.  Then it must be that $\E[W(V,x(V)) \ |\ d(V) \geq C] < 0.$  We will show this implies $W(M(C-1,p)) > W(M(C,p))$, contradicting the optimality of $C$.  To see why, note that when $d(p) < C$, the welfare of the two auctions is identical.  Write $x$ and $x'$ for the allocation functions from $M(C,p)$ and $M(C-1,p)$, respectively.  When $x \geq C$, we have $x' = C-1$, and $V(x') \geq \frac{C-1}{x} \cdot V(x)$ by concavity.  Similarly, $Q(x') \leq \frac{C-1}{x} \cdot Q(x)$ by convexity.  Thus, for any $V$ such that $d(V) \geq C$ and hence $x(V) = C$, we have
\[ V(x'(V)) - Q(x'(V)) \geq \frac{C-1}{x(V)} ( V(x(V)) - Q(x(V)) ) = \frac{C-1}{C} ( V(x(V)) - Q(x(V)) ). \]
Taking expectations, we therefore have
\begin{align*}
\E[V(x'(V)) - Q(x'(V)) \ |\ d(V) \geq C] &
\geq \frac{C-1}{C} ( \E[V(x(V)) - Q(x(V)) \ |\ d(V) \geq C] ) \\
& > \E[V(x(V)) - Q(x(V)) \ |\ d(V) \geq C]
\end{align*}
where the last inequality follows because $\E[V(x(V)) - Q(x(V)) \ |\ d(V) \geq C] < 0$ by assumption.  This implies $W(M(C-1,p)) > W(M(C,p))$, which is the desired contradiction.
\end{proof}

\subsection{Proof of Lemma~\ref{lem.unsafe.points}} \label{apppf:lem.unsafe.points}

First recall the statement of the lemma.  Choose some quantity $\tilde{C}$ and a number of licenses $x \leq \tilde{C}$.  
Then $P(\tilde{C}) \cdot x  - Q(x) \leq \tilde{C} \cdot ( P(\tilde{C}) - P(\tilde{C}/2) )$.


\begin{proof}
Suppose $x \geq \tilde{C}/2$.  Then $Q(x) \geq \frac{x}{\tilde{C}/2} \cdot Q(\tilde{C}/2)$ by convexity.  
So $P(\tilde{C}) \cdot x - Q(x) \leq x \cdot \left( P(\tilde{C}) - Q(\tilde{C}/2) / (\tilde{C}/2) \right) \leq \tilde{C} \cdot (P(\tilde{C}) - P(\tilde{C}/2) )$ as claimed.

Next suppose $x < \tilde{C}/2$.  Then $(x,P(\tilde{C}) \cdot x)$ and $(x,Q(x))$ both lie between the line through the origin with slope $P(\tilde{C})$, and the line between $(\tilde{C}, Q(\tilde{C}))$ and $(\tilde{C/2}, Q(\tilde{C}/2))$.  Their difference is therefore at most twice the difference between those two lines at x-coordinate $\tilde{C}/2$, which is $(\tilde{C}/2) \cdot ( P(\tilde{C}) - P(\tilde{C}/2) )$.
\end{proof}

\subsection{Omitted Details from the Proof of Theorem~\ref{thm:main}}\label{app:proof.thm.main}

Recall the statement of Theorem~\ref{thm:main}: for any cap $C$ and price floor $\pf$, there exists $C'$ and a constant $c$ such that $c \cdot W(M(C')) + W(M^{(1)}) \geq W(M(C,\pf))$.

As in the proof of Theorem~\ref{thm.q}, the welfare generated by $M(C,p)$ can be broken down as follows:

\begin{align*}
W(M(C,p)) &= \Pr_{V \sim F}[d(V) \geq C] \cdot \E[W(x(V),V) \ |\ d(V) \geq C] \\
&\quad + \Pr_{V \sim F}[d(V) < C] \cdot \E[W(x(V),V) \ |\ d(V) < C] \\
\end{align*}

We will break down the second term into three sub-terms.  Recall that we can assume $\Pr[ d(V) \geq C ] \leq 1 - 1/e$, as otherwise the result follows from Theorem~\ref{thm.q}.  Choose $\Cmed < C$ to be the largest integer such that $\Pr[ d(V) \geq \Cmed ] \geq 1 - 1/e$.  Note then that $\Pr[ d(V) \leq \Cmed ] \geq 1/e$.
We will write $x_i = x_i^* + x_i^{>} + x_i^{<}$.  If $x_i \geq \Cmed$ then we set $x_i^* = x_i$ and $x_i^{>} = x_i^{<} = 0$, otherwise we set $x_i^* = 0$.  In this latter case, we set $x_i^{>}$ and $x_i^{<}$ similarly to the proof of Theorem~\ref{thm.q}, except that we consider marginal values above and below $P(\Cmed)$ rather than $P(C)$.  That is, if $\theta_i$ is the largest $j \geq 1$ such that $v_i(j) \geq P(\Cmed)$ (or $\theta_i = 0$ if $v_i(1) < P(\Cmed)$), we have $x^{>}_i = \min\{x_i, \theta_i\}$ and $x^{<}_i = x_i - x^{>}_i$.  
And as before, we will define $V_i^{<}$ as $V_i^{<}(x) = V_i^{<}(x | \theta_i)$, 
the valuation of $i$ counting only those marginal values less than $P(\Cmed)$.

Using convexity of $Q$ as in case 1, we then have
\begin{align*}
W(M(C,p)) &\leq \Pr_{V \sim F}[d(V) \geq C] \cdot \E[W(x(V),V) \ |\ d(V) \geq C] \label{eq.2A}\tag{2A}\\
&\quad +  \int_{V \colon d(V) < C} (\sum_i V_i(x_i^{*}(V)) - Q(x^{*}(V))) dFV \label{eq.2B}\tag{2B}\\
&\quad +  \int_{V \colon d(V) < C} (\sum_i V_i(x_i^{>}(V)) - Q(x^{>}(V))) dFV \label{eq.2C}\tag{2C}\\
&\quad +  \int_{V \colon d(V) < C} (\sum_i V_i^{<}(x_i^{<}(V)) - Q(x^{<}(V)) ) dFV \label{eq.2D}\tag{2D}
\end{align*}
so that $W(M(C,p)) \leq (2A) + (2B) + (2C) + (2D)$.

As in Theorem~\ref{thm.q}, we have that $W(M(C)) \geq (2A)$.  For $(2B)$, note that the probability that every bidder $i$ has $x_i \leq \Cmed$ is at least $1/e$, from the definition of $\Cmed$.  If we write $p_i$ for the probability that $x_i \leq \Cmed$, then we have $\prod_i p_i \geq 1/e$.  Subject to this condition, the value of $\sum_i (1 - p_i)$ is maximized by setting all $p_i$ equal, in which case we have $\sum_i (1 - p_i) \leq n(1 - e^{-1/n}) \leq 1$.  But this means that the expected number of agents with $x_i > \Cmed$ is at most $1$.
Thus, $(2B)$ is at most the value of allocating to just a single bidder with $x_i > \Cmed$, which is at most the maximum possible welfare attainable from allocating to any one bidder.  We therefore have that $W(M^{(1)}) \geq (2B)$.  

We next claim that $W(M(\Cmed)) \geq (2C)/4$.  To see why, note that
\begin{align*}
& \int_{V \colon d(V) < C} \left(\sum_i V_i(x_i^{>}(V)) - Q(x^{>}(V)) \right) dFV \\
\leq\ & 2 \int_{V \colon d(V) < C} 1[x(V) < 2\Cmed] \cdot \left(\sum_i \left( V_i(x_i^{>}(V)) - x_i^{>} \cdot \tfrac{Q(x^{>}(V))}{x^{>}(V)}\right) \right) dFV \\
\leq\ & 4 \int_{V \colon d(V) < C} 1[x(V) < 2\Cmed] \cdot \left(\sum_i \left( \sum_{j \leq x_i^{>}} v_i(j) /2 \right) - x_i^{>} \cdot \tfrac{Q(\min\{\Cmed,x^{>}(V)\})}{\min\{\Cmed,x^{>}(V)\}}\right) dFV \\
\leq\ & 4 W(M(\Cmed))
\end{align*}
where the last inequality follows because, when imposing a cap of $\Cmed$ on an allocation of total size at most $2\Cmed$, the value obtained is at least the top half of all marginal values, which is more than half of all marginal values.

Our final step is to bound $(2D)$.  This is the most technical step, so let us first describe our approach.  This bound is similar to Claim~\ref{claim.1c} from the proof of Theorem~\ref{thm.q}, but with two additional complications.  First, we will handle the case where $\Cmed$ is odd, in which case $\Cmed/2$ is not an integer; we handle this by extending $V(\cdot)$ and $Q(\cdot)$ to the domain of non-negative real numbers via linear interpolation, studying a (fractional) cap of $\Cmed/2$ in that setting, then showing that one can round appropriately.  Second, in Claim~\ref{claim.1c} we made use of Lemma~\ref{lem.opt}, which said roughly that the expected welfare contribution of realizations where demand exceeds the cap must be non-negative.  Unfortunately, the same is not necessarily true of $\Cmed$: it could be that $\E_V[V(\Cmed) - Q(\Cmed)] < 0$ even after conditioning on the event that $d(V) \geq \Cmed$.  However, we will show that the optimality of $M(C,p)$ implies that this quantity cannot be \emph{too} negative, and this will allow us to show that $\E_V[V(\Cmed/2) - Q(\Cmed/2)]$ is positive and bounded away from $0$.  This will allow us to proceed as in Claim~\ref{claim.1c} and obtain the desired bound on $(2D)$.

Now for the details.  First, we extend $V(\cdot)$ and $Q(\cdot)$ to allow fractional inputs by linearly interpolating between integers, and likewise extend the definition of $P(x)$ to $Q(x)/x$ even for fractional $x$.  Note that $V$ is still concave and $Q$ is still convex under this extension.  The proof of Lemma~\ref{lem.unsafe.points} did not use integrality of $x$ or $\tilde{C}/2$, so it holds for these interpolated versions of $V$ and $Q$ as well.  Define $\Psi := (\Cmed/2) \cdot (P(\Cmed) - P(\Cmed/2))$.  Then Lemma~\ref{lem.unsafe.points} (with $\tilde{C} = \Cmed$), combined with the fact that the total demand is at most $\Cmed$ with probability at least $1/e$, implies that $(2D) \leq (2/e) \cdot \Psi$.

%
%

We next claim that for $\beta = \frac{8}{3(e-1)} \approx 1.55$, we must have $\Pr[V(\Cmed) < Q(\Cmed) - \beta \cdot \Psi \ |\ d(V) \geq \Cmed] \leq 1/4$.  That is, conditional on having total demand at least $\Cmed$, the probability that the welfare generated at $\Cmed$ is more negative than $-\beta \cdot \Psi$ is at most $1/4$.  Suppose not: then consider the difference in welfare between $M(C,p)$ and $M(C, P(\Cmed))$.  The difference is that (2D) would be removed, as would any (negative) contribution with $V(\Cmed) < Q(\Cmed)$ and $d(V) > \Cmed$.  As we noted above, the total contribution of (2D) to the welfare is at most $2 \Psi / e$.  But the contribution due to the event $V(\Cmed) < Q(\Cmed) - \Psi$ and $d(V) > \Cmed$ is at most $-(\beta \Psi)(1 - 1/e)(3/4)$.  So as long as $\beta \geq (8 / 3(e-1)) \approx 1.55$, we would have $W(M(C, P(\Cmed))) > W(M(C,p))$.  This contradicts the optimality of $M(C,p)$.

We can therefore assume $\Pr[V(\Cmed) \geq Q(\Cmed) - \beta \Psi \ |\ d(V) \geq \Cmed] \geq 1/4$.  By concavity of $V$ and convexity of $Q$, plus the fact that $Q(\Cmed) = 2(Q(\Cmed/2)+\Psi)$ from the definition of $\Psi$, we conclude that $\Pr[V(\Cmed/2) \geq Q(\Cmed/2) + (2-\beta) \Psi/2 \ |\ d(V) \geq \Cmed] \geq 1/4$.  

If $\Cmed$ is even, take $C' = \Cmed/2$.  Otherwise, since we defined $Q$ and $V$ by linear interpolation, the event that $V(\Cmed/2) \geq Q(\Cmed/2) + (2-\beta) \Psi/2$ implies that for some $C' \in \{ \lfloor \Cmed/2 \rfloor, \lceil \Cmed/2 \rceil \}$ we must have $V(C') \geq Q(C') + (2-\beta) \Psi/2$ as well.  Thus, in either case, there is an integer $C'$ such that $\Pr[V(C') \geq Q(C') + (2-\beta) \Psi/2 \ |\ d(V) \geq \Cmed] \geq 1/4$.  Since $d(V) \geq \Cmed$ with probability at least $1 - 1/e$, the total welfare generated by $M(C')$ from the events $d(V) \geq \Cmed$ is therefore at least $(2-\beta)(1-1/e) \Psi / 8$, and hence $W(M(C')) \geq (2-\beta)(1-1/e) \Psi / 8$.  Since $(2-\beta)(1-1/e)e/16 \geq 1/21$, the result follows.

We conclude that \[W(M(C)) + W(M^{(1)}) + 4 W(M(\Cmed)) + 21 W(M(\Cmed / 2)) \geq W(M(C, \pf)).\]  This implies Theorem~\ref{thm:main}, with $c = 26$.

\section{Cap-and-Price Auction Outcomes are not Fully Efficient}
\label{app:cap.price.bad}

We note that cap-and-price auctions cannot always implement the fully optimal allocation rule for every distribution $F$.  For an allocation rule to be implementable by some $M(C,\pf,\pc)$, it must be that on every input $\mathbf{V}$, either the total allocation is $C$, or the total allocation is at most $C$ and each agent wins precisely their marginal bids above $\pf$, or the total allocation is at least $C$ and each agent wins precisely their marginal bids above $\pc$.  When there is uncertainty about $\mathbf{V}$, such a restricted allocation rule might return suboptimal allocations on some realizations.  

In fact, we note that there are cases where, even under truthful reporting, no cap-and-price auction can achieve a non-vanishing approximation of the unrestricted welfare-optimal allocation. 

\begin{example} \label{ex:logscale}
	We present an example for which the expected welfare of any cap-and-price auction $M(C,\pf)$ is at most an $O(1/n)$-approximation to the unrestricted allocation that can optimize individually for each realization of the valuation curves, the welfare of which we call the ``first-best welfare'' in line with economics terminology.
	
		Let $Q(x) = x^2$.  There is a single firm participating in the auction.  That firm's valuation curve is drawn according to a distribution $F$ over $n$ different valuation curves, which we'll denote $V^{(1)}, \ldots, V^{(n)}$.  For all $i$, valuation $V^{(i)}$ is defined by $V^{(i)}(x) = 2^{i+1} \cdot x$.  These curves are depicted in Figure~\ref{fig:logscale}.  The probability that $V^{(i)}(\cdot)$ is drawn from $F$  is proportional to $\frac{1}{2^{2i}}$.  That is, the firm has valuation $V^{(i)}$ with probability $\frac{1}{\beta 2^{2i}}$, where $\beta = \sum _{i=1} ^n 2^{-2i} = \frac{1}{3} (1-2^{-2n})$ is the normalization constant.

Then in expectation over the realizations of $V(\cdot)$, the \emph{first-best} welfare is
$$\sum _{i=1} ^n 2^{2i} \frac{1}{\beta 2^{2i}} = \frac{1}{\beta} \cdot n.$$

Consider a cap-and-price auction $M(C, \pf)$.  For any cap $C$, the optimal price floor is to set $\pf = 2C$.  This eliminates all possible negative welfare contributions at $C$.  As a sanity check, any smaller price floor allows negative contributions, yet any larger price floor excludes positive contributions.  For $C' \in (2^k, 2^{k+1}]$, , if the firm has valuation $V^{(i)}$ with $i \leq k$, then all marginal values are strictly below $\pf$, so no licenses are allocated and the welfare generate is $0$.  On the other hand, if the firm has valuation $V^{(i)}$ for any $i \geq k +1$, then since at most $C < 2^{k+1}$ licenses can be purchased, the auction generates welfare at most $2^{i+1} (2^{k+1}) - (2^{k+1})^2 = 2^{2k+2}(2^{i-k} - 1)$.  This yields expected welfare
\begin{align*}
\sum _{i=k+1} ^n 2^{2k+2} (2^{i-k}-1) \cdot \frac{1}{\beta 2^{2i}} &= \frac{1}{\beta} \sum _{i=k+1} ^n  2^{-2(i-k-1)} (2^{i-k}-1)  \\
&= \frac{1}{\beta} \sum _{j=1} ^{n-k} 2^{-2(j-1)} (2^{j}-1) \\
&\leq \frac{1}{\beta} \sum _{j=1} ^{n-k} 2^{-(j-1)} \\
&= \frac{1}{\beta}(2 - 2^{-(n-k-1)}) \\
&\leq \frac{2}{\beta} = o(n).
\end{align*}
In comparison to $\opt$, any cap-and-price policy is off by an order of $n$, so no $o(1/n)$-approximation to the first-best welfare is possible.  This concludes the example.
\end{example}

\begin{figure} 
\centering
\includegraphics[scale=.5]{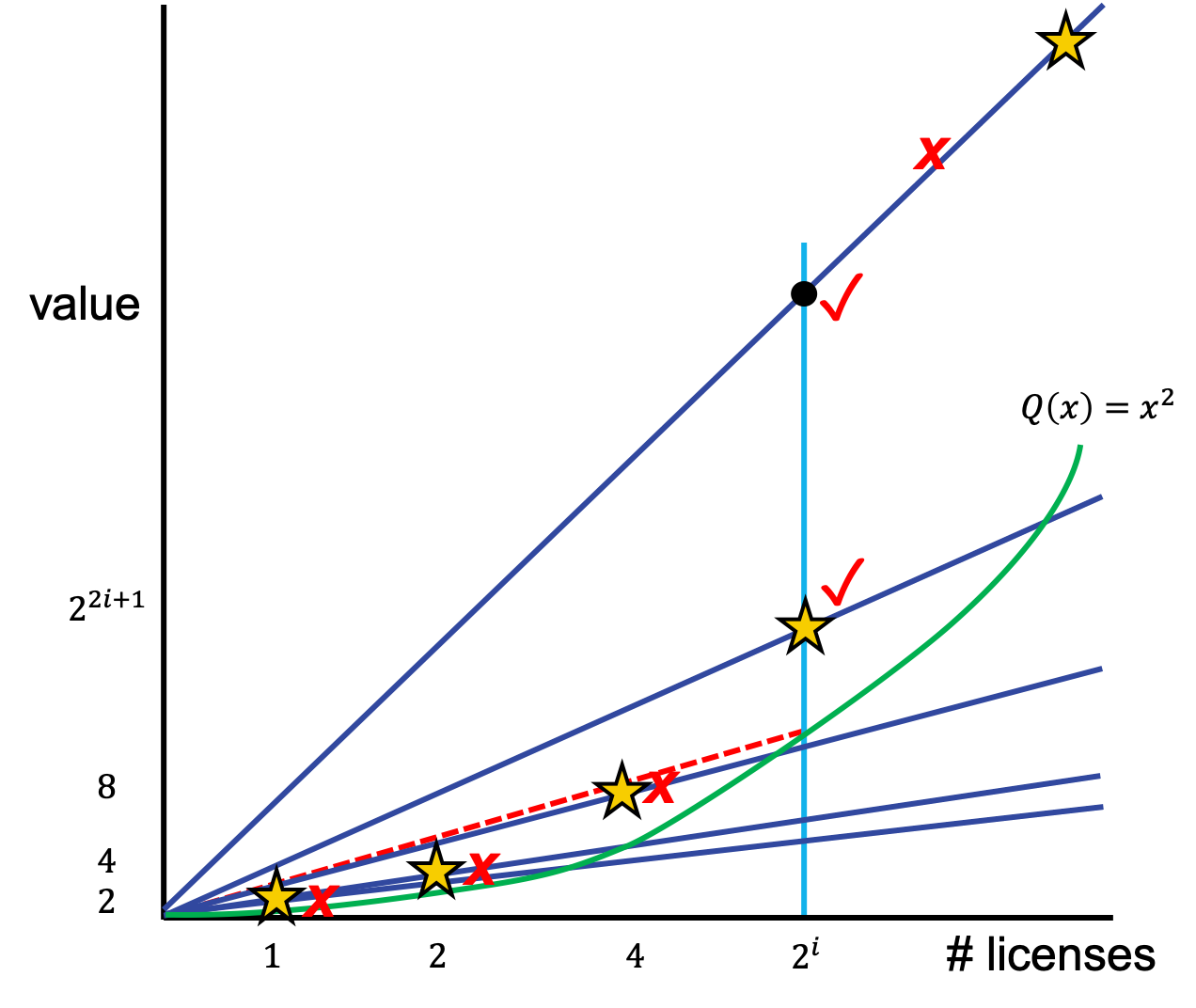}
\caption{A depiction of Example~\ref{ex:logscale}, with the welfare achieve by the vertical cap and dashed price floor denoted by the checks and x's.} \label{fig:logscale}
\end{figure}

\end{document}